\documentclass[oneside,english,reqno,12pt]{amsart}
\usepackage[12pt]{extsizes}

\setlength{\textwidth}{\paperwidth}
\addtolength{\textwidth}{-2.4in}
\calclayout
\usepackage{geometry}
\geometry{
	left   = 1.08in,
	right  = 1.08in,
	top    = 1.08in,
	bottom = 1.08in
}

\usepackage{babel}
\usepackage{enumitem}
\usepackage{amssymb,amsmath,amsfonts,latexsym}
\usepackage{bbm}
\usepackage{appendix}
\DeclareMathOperator{\tr}{Tr}
\usepackage{hyperref}
\usepackage{color}
\makeatletter
\numberwithin{equation}{section}
\numberwithin{figure}{section}
\theoremstyle{plain}
\newtheorem{theorem}{\protect\theoremname}
\theoremstyle{corollary}

\theoremstyle{remark}
\newtheorem{remark}[theorem]{\protect\remarkname}
\theoremstyle{plain}
\newtheorem{lemma}[theorem]{\protect\lemmaname}
\newlist{casenv}{enumerate}{4}
\setlist[casenv]{leftmargin=*,align=left,widest={iiii}}
\setlist[casenv,1]{label={{\itshape\ \casename} \arabic*.},ref=\arabic*}
\setlist[casenv,2]{label={{\itshape\ \casename} \roman*.},ref=\roman*}
\setlist[casenv,3]{label={{\itshape\ \casename\ \alph*.}},ref=\alph*}
\setlist[casenv,4]{label={{\itshape\ \casename} \arabic*.},ref=\arabic*}

\makeatother

\providecommand{\corollaryname}{Corollary}
\providecommand{\lemmaname}{Lemma}
\providecommand{\remarkname}{Remark}
\providecommand{\casename}{Case}
\providecommand{\theoremname}{Theorem}

\begin{document}

\title[Many-Body Blow-Up of Boson Stars]{Many-Body Blow-Up Profile of Boson Stars \\ with External Potentials}

\subjclass[2010]{81V17, 81V70}
\keywords{Blow-up profile, Bose--Einstein condensation, boson stars, Chandrasekhar limit mass, ground states, mass concentration, quantum de Finetti}

\begin{abstract}
We consider a 3D quantum system of $N$ identical bosons in a trapping potential $|x|^p$, with $p\geq0$, interacting via a Newton potential with an attractive interaction strength $a_{N}$. For a fixed large $N$ and the coupling constant $a_{N}$ smaller than a critical value $a_{*}$ \emph{(Chandrasekhar limit mass)}, in an appropriate sense, the many-body system admits a ground state. We investigate the blow-up behavior of the ground state energy as well as the ground states when $a_{N}$ approaches $a_{*}$ sufficiently slowly in the limit $N\to\infty$. The blow-up profile is given by the Gagliardo--Nirenberg solutions.
\end{abstract}

\author{Dinh-Thi Nguyen}
\address{Dinh-Thi Nguyen, Mathematisches Institut, Ludwig--Maximilans--Universit\"at M\"unchen, Theresienstrasse 39, 80333 Munich, Germany.} 
\email{nguyen@math.lmu.de}

\maketitle

\section{Introduction}
The so-called \emph{boson stars} are a class of models from relativistic many-body quantum mechanics inspired by the Chandrasekhar theory of stellar collapse. For fermion stars like white dwarfs and neutron stars, the collapse under attractive gravitational forces is prevented by Pauli exclusion principles. For boson stars, such a toy model of gravitational collapse is obtained by neglecting the Pauli principle (as also considered in the so-called \emph{bosonic atoms} \cite{BeLi-83,So-90}), i.e. assuming that all the force carrier particles are bosons. It is a fundamental fact that boson stars collapse when their masses are bigger than a critical number. The maximum mass of stable stars, called the \emph{Chandrasekhar limit mass}, was discovered by Chandrasekhar \cite{Ch-31-1,Ch-31-2,Ch-31-3,Ch-84} in 1930, which gained him the 1983 Nobel Prize in Physics. In this paper, we will study the collapse phenomenon of boson stars with a rigorous mathematical approach. 

We consider a system of $N$ identical relativistic bosons in $\mathbb{R}^3$ interacting via the self-gravitational force, described by the \emph{pseudo-relativistic} Schr\"{o}dinger Hamiltonian
\begin{equation}\label{hamiltonian}
H_{N} = \sum_{i=1}^{N}\big(\sqrt{-\Delta_{x_i}+m^2}+V(x_i)\big)-\frac{a_{N}}{N-1}\sum_{1\leq i<j \leq N}|x_i-x_j|^{-1}
\end{equation}
acting on $\mathfrak{H}^N=\bigotimes_{\rm sym}^N L^2(\mathbb{R}^3)$, the Hilbert space of square-integrable symmetric functions. The pseudo-differential operator $\sqrt{-\Delta+m^{2}}$ describes the kinetic energy of a relativistic quantum particle with mass $m > 0$. The case $V=0$ is the most physically relevant, but it is also mathematically interesting to include a general external potential $V$. The parameter $a_{N}>0$ describes the strength of the interaction. We will take $a_{N}\uparrow a_{*}$ for a critical value $a_{*}$ described below. The coupling constant $1/(N-1)$ ensures that the kinetic and interaction energies are comparable in the limit $N\to\infty$.

We are interested in the large-$N$ behavior of the ground states of the system. Recall that the ground state energy per particle of $H_{N}$ is defined by
$$
E_{N}^{\rm Q} := N^{-1}\inf\text{spec } H_{N} = N^{-1}\inf_{\Psi_{N}\in\mathfrak{H}^N,\|\Psi_{N}\|_{L^2}=1} \left\langle \Psi_{N},H_{N}\Psi_{N} \right\rangle.
$$
Note that without the external potential, the system is translation invariant and there are no ground states. Moreover, in this case, the condensation of the {\em approximate ground states} (in the energy sense) is not expected, because the system can easily split into many small pieces with same energy. An external potential $V$ enforces the existence of ground states of $H_{N}$. 

We will assume that $a_{N}<a_{*}$, where $a_{*}$ is the optimal constant of the Gagliardo--Nirenberg-type inequality
\begin{equation}\label{ineq:GN}
\|(-\Delta)^{\frac{1}{4}}u\|_{L^{2}}^{2} \|u\|_{L^{2}}^{2} \geq \frac{a_{*}}{2}\iint_{\mathbb{R}^{3}\times\mathbb{R}^{3}}\frac{|u(x)|^{2}|u(y)|^{2}}{|x-y|}{\rm d}x{\rm d}y,\quad \forall u\in H^{\frac{1}{2}}(\mathbb{R}^{3}).
\end{equation}
It is well-known (see \cite{LiYa-87,LiTh-84,Le-07,FrJoLe-07,FrLe-09,LeLe-11}) that $4/\pi<a_{*}<2.7$ and that the inequality \eqref{ineq:GN} has an optimizer $Q\in H^{\frac{1}{2}}(\mathbb{R}^3)$. This optimizer is positive radially symmetric decreasing, by rearrangement inequality (see \cite[Chapter 3]{LiLo}), and can be chosen to satisfy
\begin{equation}\label{eq:GN-optimizer}
\|(-\Delta)^{\frac{1}{4}}Q\|_{L^{2}}^{2} =  \|Q\|_{L^{2}}^{2} = \frac{a_{*}}{2}\iint_{\mathbb{R}^{3}\times\mathbb{R}^{3}}\frac{|Q(x)|^{2}|Q(y)|^{2}}{|x-y|}{\rm d}x{\rm d}y=1.
\end{equation}
Moreover, such $Q$ solves the \emph{massless boson star equation}
\begin{equation}
\sqrt{-\Delta}Q+Q-a_{*}(\left|\cdot\right|^{-1}\star|Q|^{2})Q=0\label{eq:massless boson star}
\end{equation}
and it verifies the decay properties (see \cite{FrLe-09})
\begin{equation} \label{eq:decay}
Q(x)\le C (1+|x|)^{-4} \quad \text{and} \quad (\left|\cdot\right|^{-1}\star|Q|^{2})(x)\leq C(1+|x|)^{-1}.
\end{equation}
The uniqueness (up to translations and dilations) of the optimizers of \eqref{ineq:GN}, as well as the uniqueness (up to translations) of the positive solutions of \eqref{eq:massless boson star}, are {\em open problems} (see \cite{Le-09,FrLe-09,GuZe-19} for related discussions). Note that the translations and dilations can be fixed by using \eqref{eq:GN-optimizer} and the radial properties. In the following, we define 
\begin{equation}\label{cG}
\mathcal{G} = \left\{\text{positive radially decreasing functions satisfying \eqref{eq:GN-optimizer}--\eqref{eq:massless boson star}}\right\} . 
\end{equation}
It is expected that $\mathcal{G}$ has only one element (see \cite{LiYa-87}), but  our analysis will not rely on this conjecture. 

In a seminal paper \cite{LiYa-87}, Lieb and Yau proved that for any fixed $a_{N}=a<a_{*}$, the quantum energy converges to the semi-classical Hartree energy 
\begin{equation}\label{fundamental}
\lim_{N\to\infty}E_{N}^{\rm Q} =  \inf_{u\in H^{\frac{1}{2}}(\mathbb R^3),\|u\|_{L^2}=1}\mathcal{E}_{a}^{\rm H}(u)=: E_{a}^{\rm H} 
\end{equation}
where
\begin{equation}\label{eq:boson star functional}
\mathcal{E}_{a}^{\rm H} (u) = \|(-\Delta+m^{2})^{\frac{1}{4}}u\|_{L^{2}}^2 + \int_{\mathbb{R}^{3}}V(x)|u(x)|^{2}{\rm d}x -\frac{a}{2}\iint_{\mathbb{R}^{3}\times\mathbb{R}^{3}}\frac{|u(x)|^{2}|u(y)|^{2}}{|x-y|}{\rm d}x{\rm d}y.
\end{equation}
In fact, the Hartree energy functional $\mathcal{E}_{a}^{\rm H}$ is obtained by assuming that all the particles occupy a common one-particle state, and hence the Hartree energy $E_{a}^{\rm H}$ is an upper bound to the quantum energy $E_{N}^{\rm Q}$.

We note that Lieb and Yau proved \eqref{fundamental} without external potentials, but their result can be extended easily with an external potential $V$. A new proof of \eqref{fundamental} was found recently by Lewin, Nam and Rougerie \cite{LeNaRo-14}, using the \emph{quantum de Finetti} theorems \cite{St-69,HuMo-76,FaVa-06,ChKoMiRe-07,Ch-11,Ha-13,LeNaRo-14,LeNaRo-15,LeNaRo-15-fr,Ro-15,LeNaRo-16,NaRoSe-16,LeNaRo-17,BrHa-17,LeNaRo-18}. This approach allows us to prove, in the case of trapping potentials, the condensation of the many-body ground states to the Hartree ground states in terms of reduced density matrices
$$
\lim_{N\to\infty} \tr \Big|\gamma_{\Psi_N}^{(k)}-\int |u^{\otimes k}\rangle\langle u^{\otimes k} | {\rm d} \mu(u) \Big| = 0, \quad \forall k\in \mathbb{N}.
$$
Here ${\rm d}\mu$ is a Borel probability measure supported on the Hartree ground states. Recall that the $k$-particle reduced density matrix, defined for any $\Psi\in\mathfrak{H}^N$, is the  partial trace
$$
\gamma_{\Psi}^{(k)}:=\tr_{k+1\to N}|\Psi\rangle\langle\Psi|.
$$
Equivalently, $\gamma_{\Psi}^{(k)}$ is the trace class operator on $\mathfrak{H}^k$ with kernel
$$
\gamma_{\Psi}^{(k)}(x_1,\ldots,x_k;y_1,\ldots,y_k)=\int_{\mathbb{R}^{3(N-k)}}\overline{\Psi(x_1,\ldots,x_k;Z)}{\Psi(y_1,\ldots,y_k;Z)}{\rm d}Z.
$$

In the present paper, we will study the blow-up phenomenon of boson stars when $a_{N} \uparrow a_{*}$ as $N\to\infty$. Due to the technical reason explained above, we will assume the presence of an external potential, which is eventually taken for simplicity under the form  
$$
V(x)=|x|^p
$$
for a fixed parameter $p > 0$. This ensures the existence of ground states of $H_{N}$ when $a_{N}<a_{*}$, by a standard compactness argument. We will prove that the ground states have a universal blow-up profile described by the set of Gagliardo--Nirenberg optimizers. 

To state our result, let us introduce the following notations
\begin{equation}\label{eq:scaling-length}
\ell_{N}=\Lambda (a_{*}-a_{N})^{-\frac{1}{q+1}}
\end{equation}
where $q=\min\{p,1\}$ and 
\begin{equation} \label{lambda-p}
\Lambda = \begin{cases} \displaystyle \inf_{W\in \mathcal{G}}\Big(a_{*} p \int_{\mathbb{R}^{3}}|x|^{p}|W(x)|^{2}{\rm d}x\Big)^{\frac{1}{p+1}} & \text{if } 0<p<1, \\ 
\displaystyle  \inf_{W\in \mathcal{G}}\Big(\frac{m^{2}a_{*}}{2}\|(-\Delta)^{-\frac{1}{4}}W\|_{L^{2}}^{2}+a_{*} \int_{\mathbb{R}^{3}}|x||W(x)|^{2}{\rm d}x\Big)^{\frac{1}{2}} & \text{if  }p=1,\\
\displaystyle  \inf_{W\in \mathcal{G}}m\sqrt{\frac{a_{*}}{2}}\|(-\Delta)^{-\frac{1}{4}}W\|_{L^{2}} & \text{if  } p>1.
\end{cases}
\end{equation}
In the case $p\geq 1$, the collapse scale \eqref{lambda-p} is set by the subleading contribution of the kinetic energy in a large momentum expansion. More precisely, it comes from the second order in the Taylor expansion of $\sqrt{-\Delta+m^2}$ (see e.g. \cite{Ng-17-boson} for more details). This effect does not appear in \cite{GuSe-14,LeNaRo-17-book} (see also \cite{EyRo-19} for a related result where a subleading contribution from the high momentum expansion of the interaction is relevant).

Our main result is the following

\begin{theorem}[Collapse and condensation of the many-body ground states]\label{thm:behavior-Nbody} 
	Assume that $m>0$ and $V(x)=|x|^p$ for some $p > 0$. Let $a_{N} = a_{*} - N^{-\alpha}$ with $0<\alpha <1/3$. Then we have
	\begin{equation}\label{asymptotic:many-body-energy-1}
	E_{N}^{\rm Q} = (a_{*}-a_{N})^{\frac{q}{q+1}} \Big(\frac{q+1}{q}\cdot\frac{\Lambda}{a_{*}} + o(1)_{N\to\infty}\Big)
	\end{equation}
	where $q=\min\{p,1\}$ and $\Lambda$ is given by \eqref{lambda-p}.
	
	In addition, assume that $0<p\leq 1$ and $0<\alpha<p/(17p+15)$. Let $\Psi_N$ be a ground state of $H_{N}$, which exists. Then there exists a Borel probability measure ${\rm d}\mu$ supported on $\mathcal{G}$ defined in \eqref{cG} such that, along a subsequence of the rescaled states $\Phi_N=\ell_{N}^{-3N/2}\Psi_N(\ell_{N}^{-1}\cdot)$, we have
	\begin{equation}\label{conv:bec}
	\lim_{N\to\infty} \tr \Big|\gamma_{\Phi_N}^{(k)}-\int |u^{\otimes k}\rangle\langle u^{\otimes k}|{\rm d}\mu(u)\Big| = 0, \quad \forall k\in \mathbb{N}.
	\end{equation}
\end{theorem}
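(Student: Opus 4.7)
The plan is to establish the energy asymptotics \eqref{asymptotic:many-body-energy-1} and the condensation \eqref{conv:bec} in sequence, combining a sharp analysis of the one-body Hartree problem in the collapse regime $a\uparrow a_{*}$ with the quantum de Finetti machinery of Lewin--Nam--Rougerie \cite{LeNaRo-14} that bridges the many-body and Hartree levels. For the upper bound I would feed the trial state $u_{\ell}^{\otimes N}$ with $u_{\ell}(x)=\ell^{3/2}W(\ell x)$ and $W\in\mathcal{G}$ into $E_{N}^{\rm Q}\le\mathcal{E}_{a_{N}}^{\rm H}(u_{\ell})$. A Fourier rescaling combined with the Taylor expansion $\sqrt{|k|^{2}+m^{2}}=|k|+m^{2}/(2|k|)+O(|k|^{-3})$, the normalisation \eqref{eq:GN-optimizer}, and the decay \eqref{eq:decay} gives
\[
\mathcal{E}_{a_{N}}^{\rm H}(u_{\ell})=\frac{a_{*}-a_{N}}{a_{*}}\,\ell+\ell^{-p}\!\int_{\mathbb{R}^{3}}|y|^{p}|W(y)|^{2}{\rm d}y+\frac{m^{2}}{2\ell}\|(-\Delta)^{-1/4}W\|_{L^{2}}^{2}+O(\ell^{-3}).
\]
Optimising first in $\ell$ and then taking $\inf_{W\in\mathcal{G}}$ produces \eqref{asymptotic:many-body-energy-1}: the subleading balance picks the trap term for $0<p<1$ ($q=p$), the mass correction for $p>1$ ($q=1$), and both for $p=1$, reproducing the three branches of \eqref{lambda-p}.

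For the lower bound I would first prove the matching Hartree asymptotic. For normalised $u$, \eqref{ineq:GN} together with the operator inequality $\sqrt{-\Delta+m^{2}}\ge\sqrt{-\Delta}+m^{2}(2\sqrt{-\Delta+m^{2}})^{-1}$ (from concavity of $\sqrt{\cdot}$) yields
\[
\mathcal{E}_{a_{N}}^{\rm H}(u)\ge\frac{a_{*}-a_{N}}{a_{*}}\|(-\Delta)^{1/4}u\|_{L^{2}}^{2}+\int_{\mathbb{R}^{3}}V|u|^{2}{\rm d}x+\frac{m^{2}}{2}\|(-\Delta+m^{2})^{-1/4}u\|_{L^{2}}^{2},
\]
and optimising the right-hand side under the scaling $u(x)=\lambda^{3/2}v(\lambda x)$ matches the upper bound after a radially-decreasing rearrangement placing $v\in\mathcal{G}$. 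Transferring this bound to the many-body level follows the strategy of \cite{LeNaRo-14}: apply the quantum de Finetti theorem to $\gamma_{\Psi_{N}}^{(2)}$, with a priori bounds on the kinetic energy per particle and on $\tr(V\gamma_{\Psi_{N}}^{(1)})$ coming from the trap to secure tightness, then the de Finetti representation gives $E_{N}^{\rm Q}\ge\int\mathcal{E}_{a_{N}}^{\rm H}(u)\,{\rm d}\mu(u)-O(N^{-c})\ge E_{a_{N}}^{\rm H}-O(N^{-c})$. The restriction $\alpha<1/3$ ensures $N^{-c}\ll(a_{*}-a_{N})^{q/(q+1)}$.

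For the condensation statement, introduce $\Phi_{N}(y)=\ell_{N}^{-3N/2}\Psi_{N}(y/\ell_{N})$. A Fourier-rescaling computation yields
\[
\ell_{N}^{-1}E_{N}^{\rm Q}=\frac{1}{N}\Big\langle\Phi_{N},\sum_{i}\sqrt{-\Delta_{y_{i}}+m^{2}/\ell_{N}^{2}}\,\Phi_{N}\Big\rangle+\frac{\ell_{N}^{-(p+1)}}{N}\Big\langle\Phi_{N},\sum_{i}|y_{i}|^{p}\Phi_{N}\Big\rangle-\frac{a_{N}}{N(N-1)}\sum_{i<j}\Big\langle\Phi_{N},|y_{i}-y_{j}|^{-1}\Phi_{N}\Big\rangle.
\]
Combined with \eqref{asymptotic:many-body-energy-1} and \eqref{ineq:GN}, and noting that the leading kinetic and interaction contributions per particle of $\Phi_{N}$ cancel in the limit $a_{N}\to a_{*}$, this yields a uniform $H^{1/2}$ bound on $\gamma_{\Phi_{N}}^{(1)}$ together with a uniform bound on its $|y|^{p}$-moment, hence tightness. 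The strong quantum de Finetti theorem then provides, along a subsequence, a Borel probability measure $\mu$ on the $L^{2}$-unit sphere with $\gamma_{\Phi_{N}}^{(k)}\to\int|u^{\otimes k}\rangle\langle u^{\otimes k}|\,{\rm d}\mu(u)$ in trace norm. Passing to the limit in the displayed identity (using $m/\ell_{N}\to0$ and $\ell_{N}^{-(p+1)}\to0$) one obtains
\[
\int\Big(\|(-\Delta)^{1/4}u\|_{L^{2}}^{2}-\frac{a_{*}}{2}\iint_{\mathbb{R}^{3}\times\mathbb{R}^{3}}\frac{|u(x)|^{2}|u(y)|^{2}}{|x-y|}{\rm d}x{\rm d}y\Big){\rm d}\mu(u)=0,
\]
which together with \eqref{ineq:GN} forces $\mu$-a.e.\ $u$ to be an optimizer of \eqref{ineq:GN}; after the normalisation \eqref{eq:GN-optimizer} and radially-decreasing rearrangement, $\mu$ is supported on $\mathcal{G}$.

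The hardest step is producing uniform kinetic and trap bounds on $\Phi_{N}$ in the condensation argument: one must propagate a quantitative quantum de Finetti error through the rescaling by $\ell_{N}=\Theta(N^{\alpha/(q+1)})$ while keeping enough room in the trap $\ell_{N}^{-(p+1)}|y|^{p}$ to prevent escape of mass. The strict assumption $\alpha<p/(17p+15)$---tighter than the $\alpha<1/3$ sufficing for the energy statement---arises precisely from balancing these competing scales.
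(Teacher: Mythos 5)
Your proposal for the condensation part has two genuine gaps. First, you pass to the limit in the rescaled energy identity to conclude $\int\bigl(\|(-\Delta)^{1/4}u\|_{L^2}^2-\tfrac{a_*}{2}D(|u|^2,|u|^2)\bigr){\rm d}\mu(u)=0$, but trace-norm convergence of $\gamma_{\Phi_N}^{(k)}$ does not justify this: the kinetic term is only lower semicontinuous, and the interaction term $\tr\bigl(|x-y|^{-1}\gamma_{\Phi_N}^{(2)}\bigr)$ involves an unbounded two-body operator, so you need a uniform bound on something like $\tr\bigl((-\Delta_x)^{1/4}(-\Delta_y)^{1/4}\gamma_{\Phi_N}^{(2)}\bigr)$ to control its high-momentum part. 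This is exactly the second-moment estimate $\tr\bigl(h\otimes h\,\gamma_{\Psi_N}^{(2)}\bigr)\leq C(a_*-a_N)^{-2/(q+1)}$, which the paper proves via the ground-state equation and the operator inequality $\pm(h_x|x-y|^{-1}+|x-y|^{-1}h_x)\leq Ch_xh_y$ (Lemma \ref{operator bound}), and then feeds into the \emph{quantitative} de Finetti theorem on the finite-dimensional spectral subspace $P=\mathbbm{1}(h\leq L)$ with the Cwikel--Lieb--Rosenblum bound $N_L\leq CL^{3+3/p}$; the constraint $\alpha<p/(17p+15)$ comes precisely from balancing the errors $L^{4+3/p}/N$ and $L^{-1/4}N^{5\alpha/(4(p+1))}$ against $E_{a_N}^{\rm H}\sim N^{-\alpha p/(p+1)}$. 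Your sketch contains none of this machinery, and the needed bounds do not follow from the one-body a priori estimates you invoke.

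Second, even granting that $\mu$-a.e.\ $u$ optimizes \eqref{ineq:GN}, this only determines $u$ up to translations and dilations, and you cannot repair this ``after rearrangement'': the limit measure lives on the actual limits of the physical states $\Phi_N$, so you must show the blow-up occurs exactly at scale $\ell_N$ (dilation parameter $b=1$) and at the origin ($x_0=0$). That information is encoded in the subleading trap and mass terms, and the paper extracts it through Theorem \ref{thm:behavior-approximate-hartree-ground-states} (blow-up of \emph{approximate} Hartree minimizers, where the equality cases in \eqref{eq:boson star optimizer-4} and \eqref{ineq:rearrangement} force $b=1$, $x_0=0$), combined with the sets $S_N$ of approximate minimizers and the overlap bound $\int|\langle u,v\rangle|^{2k}{\rm d}\mu\leq\sup_{u\in\mathcal G}|\langle u,v\rangle|^{2k}$ followed by $k\to\infty$; this detour is unavoidable because uniqueness of the optimizer is unknown, so the Feynman--Hellmann shortcut is not available. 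A secondary remark on the energy statement: the paper's lower bound uses the Lieb--Yau estimate $E_N^{\rm Q}\geq E_{a_N'}^{\rm H}-CN^{-1/3}$, whose explicit $N^{-1/3}$ error is what yields the full range $0<\alpha<1/3$ for all $p>0$; your de Finetti-based transfer with an unspecified $O(N^{-c})$ would not obviously reach that range.
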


\begin{remark}
	\begin{itemize}
		\item If $V \equiv 0$, then we also obtain \eqref{asymptotic:many-body-energy-1} with $q=1$ and $\Lambda$ is given by the case $p>1$ of \eqref{lambda-p}.
		\item If $\mathcal{G}=\{Q_{0}\}$  (as conjectured in \cite{LiYa-87}), then for $p>0$ and $0<\alpha<1/3$ we have
		$$
		\lim_{N\to\infty} \tr \big|\gamma_{\Phi_N}^{(k)}-|Q_{0}^{\otimes k}\rangle\langle Q_{0}^{\otimes k}|\big| = 0, \quad \forall k\in \mathbb{N},
		$$
		without the constraints $p\leq 1$ and $\alpha<p/(17p+15)$. Moreover, the convergence holds for the whole sequence as $N\to\infty$.
	\end{itemize}
\end{remark}

The asymptotic behavior of the quantum energy follows from that of the Hartree energy. The energy estimate essentially follows from the analysis of Lieb and Yau \cite{LiYa-87}. Our main result in Theorem \ref{thm:behavior-Nbody} relies on the convergence of the ground states. Note that we obtain \eqref{asymptotic:many-body-energy-1} for all $p>0$. However, because of the lack of some compactness in the case $p>1$, which is somewhat similar to the translation-invariant case when $V \equiv 0$, our result on the Bose--Einstein condensation \eqref{conv:bec} is restricted to the case $0<p\leq 1$.

Our work is inspired by the recent study by Lewin, Nam and Rougerie \cite{LeNaRo-17-book} on the mass concentration of the Bose--Einstein condensate described by the 2D focusing many-body systems. In this case, the Gagliardo--Nirenberg inequality (with the usual non-relativistic kinetic energy and the local non-linear interaction energy)  has a unique optimizer. In particular, the convergence of the ground states essentially follows from the convergence of the ground state energy via a Feynman--Hellman argument, which simplifies the analysis and allows us to improve the range for the parameters $p$ and $\alpha$, but at the cost of requiring the uniqueness of the Gagliardo--Nirenberg optimizer.
However, in our model, the kinetic operator is non-local and the uniqueness of the limit profile of the ground states is unknown, which complicates our analysis in many places. Without this uniqueness, the method in \cite{LeNaRo-17-book} can still be coupled with the \emph{quantum de Finetti} theorem and moments estimates to derive the convergence of the ground state energy. This approach will be used in our paper. More precisely, we will reduce the problem to a finite dimensional setting, and then employ the \emph{quantitative} version of quantum de Finetti \cite{ChKoMiRe-07,Ch-11,Ha-13,FaVa-06,LeNaRo-15,LeNaRo-16,LeNaRo-17} with a refined relativistic estimate. The second moment estimate is a classical idea, which goes back to Erd\H{o}s and Yau \cite{ErYa-01} (see also \cite{ErScYa-10,NaRoSe-16,LeNaRo-17}).

So far, in the literature, the blow-up analysis of boson stars has been carried out only in the level of Hartree theory \cite{GuZe-17,Ng-17-H,Ng-17-boson,YaYa-17}, see also Section \ref{sec:blowup approximate} for a review. Here we give the first analysis from the full many-body level, which are significantly more difficult. Finally, we remark that the blow-up profile of neutron stars has been also studied in the Hartree--Fock--Bogoliubov and Chandrasekhar theories \cite{LeLe-10,Ng-17-neutron,Ng-19-neutron}. See also \cite{FrJoLe-07,FrJoLe-07-dy,FrLe-07-boson,Le-07,ElSc-07,LeLe-11,MiSc-12,FrLe-07-neutron,HaSc-09,Ha-10,HaLeLeSc-10} for rigorous results of the dynamical collapse of boson and fermion stars in the time-dependent setting.

\medskip
\textbf{Organization of the paper.} In Section \ref{sec:blowup approximate} we revisit the blow-up phenomenon in Hartree theory. In Section \ref{sec:energy estimates}, we establish energy estimates and moments estimates for the ground state energy and the many-body ground states. The proof of Theorem \ref{thm:behavior-Nbody} is concluded in Section \ref{sec:many body blowup}. Appendix \ref{app:operator-bound} contains a density argument and a proof of operator bounds for two-particle interactions.

\section{\label{sec:blowup approximate}Blow-up in the Hartree theory}

In this section, we revisit the blow-up phenomenon for the Hartree problem
$$
E_{a}^{\rm H}=\inf_{u\in H^{\frac{1}{2}}(\mathbb{R}^3),\|u\|_{L^2}=1}\mathcal{E}_{a}^{\rm H} (u).
$$
where $\mathcal{E}_{a}^{\rm H}$ is given in \eqref{eq:boson star functional}. For the reader's convenience, we recall the following results from \cite{LiYa-87,GuZe-17,Ng-17-H,Ng-17-boson,YaYa-17}.

\begin{theorem}[Existence of the Hartree ground states] \label{thm:existence of ground states} 
	Assume that $m>0$ and $V$ satisfies $0\leq V\in L_{{\rm loc}}^{\infty}(\mathbb{R}^{3})$ and $\lim_{|x|\to\infty}V(x)=\infty$. Then the following statements hold true
	\begin{itemize}
		\item[(i)] If $a > a_{*}$ then $E_{a}^{\rm H} = -\infty$. 
		\item[(ii)] If $a = a_{*}$ then $E_{a}^{\rm H} = \inf_{x\in\mathbb{R}^{3}}V(x)$, but it has no ground states. 
		\item[(iii)] If $0<a<a_{*}$ then $E_{a}^{\rm H}>0$ and it has at least one ground state.
		Moreover 
		$$
		\lim_{a\uparrow a_{*}}E_{a}^{\rm H}=E_{a_{*}}^{\rm H}=\inf_{x\in \mathbb R^3} V(x).
		$$
	\end{itemize}
\end{theorem}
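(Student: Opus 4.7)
The plan is to handle the three parts in sequence, using two unified tools throughout: the Gagliardo--Nirenberg inequality \eqref{ineq:GN} for every lower bound, and the concentrated family $u_\lambda(x) = \lambda^{3/2} Q(\lambda(x - x_0))$ built from any $Q \in \mathcal{G}$ as the trial state for every upper bound. A Fourier rescaling together with dominated convergence yields $\|(-\Delta + m^2)^{1/4} u_\lambda\|_{L^2}^2 = \lambda + o(1)$ as $\lambda \to \infty$, the normalization \eqref{eq:GN-optimizer} makes the Coulomb term equal $\lambda/a_*$ exactly, and $\int V|u_\lambda|^2 \to V(x_0)$ at any Lebesgue point of $V$ by an approximate-identity argument. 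These combine to the master identity
\[
\mathcal{E}_a^{\rm H}(u_\lambda) = \lambda\bigl(1 - a/a_*\bigr) + V(x_0) + o(1) \quad \text{as } \lambda \to \infty,
\]
which already gives part (i): when $a > a_*$ the leading coefficient is negative, so $\mathcal{E}_a^{\rm H}(u_\lambda) \to -\infty$.

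For part (ii), I would obtain the lower bound $\mathcal{E}_{a_*}^{\rm H}(u) \geq \int V|u|^2 \geq \inf V$ by combining the pointwise Fourier inequality $\sqrt{-\Delta + m^2} \geq \sqrt{-\Delta}$ with \eqref{ineq:GN} on $\|u\|_{L^2} = 1$, and the matching upper bound from the master identity at $a = a_*$ with $x_0$ chosen so that $V(x_0)$ approaches $\inf V$. Non-existence of a minimizer is then forced by a strict refinement of the same lower bound: for any $u \not\equiv 0$ and $m > 0$ one has $\|(-\Delta + m^2)^{1/4} u\|_{L^2}^2 > \|(-\Delta)^{1/4} u\|_{L^2}^2$, so $\mathcal{E}_{a_*}^{\rm H}(u) > \inf V$ and the infimum cannot be attained.

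For part (iii), strict positivity follows from the elementary bound $\mathcal{E}_a^{\rm H}(u) \geq (1 - a/a_*)\|(-\Delta + m^2)^{1/4} u\|_{L^2}^2 \geq (1 - a/a_*)\,m > 0$. Existence of a minimizer then goes by the direct method: along a minimizing sequence $(u_n)$ the same bound provides uniform $H^{1/2}$ control together with $\int V|u_n|^2 \leq C$, and since $V(x) \to \infty$ the latter provides tightness. A subsequence converges weakly in $H^{1/2}$, strongly in $L^2$, and hence strongly in $L^{12/5}$ by interpolation with the Sobolev embedding $H^{1/2}(\mathbb R^3) \hookrightarrow L^3(\mathbb R^3)$, which is exactly the regularity needed to pass the Coulomb term through Hardy--Littlewood--Sobolev; the kinetic term is weakly lower semicontinuous and the potential term is handled by Fatou. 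The continuity $\lim_{a \uparrow a_*} E_a^{\rm H} = E_{a_*}^{\rm H}$ is then free from the concavity of $a \mapsto E_a^{\rm H}$ (an infimum of affine functions of $a$), combined with parts (ii)--(iii).

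I expect the main delicate step to be the compactness in part (iii), where one must simultaneously exploit $H^{1/2}$ bounds, the trapping tightness from $V \to \infty$, and the non-locality of the Coulomb interaction; this is more subtle than in the local non-relativistic analogue. The strict-inequality argument used for non-existence in (ii) is the second point that requires care, but it is essentially immediate once one observes that the Fourier symbol $\sqrt{|k|^2 + m^2} - |k|$ is pointwise positive whenever $m > 0$.
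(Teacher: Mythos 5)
The paper does not actually prove Theorem \ref{thm:existence of ground states}: it is recalled from \cite{LiYa-87,GuZe-17,Ng-17-H,Ng-17-boson,YaYa-17}, so there is no internal proof to compare against. Your proposal is essentially the standard argument from that literature, and its skeleton is sound: the expansion $\mathcal{E}_a^{\rm H}(u_\lambda)=\lambda(1-a/a_*)+V(x_0)+o(1)$ follows from \eqref{eq:GN-optimizer} and dominated convergence on the symbol $\sqrt{\lambda^2|k|^2+m^2}-\lambda|k|\le m$; the lower bounds $\mathcal{E}_{a_*}^{\rm H}\ge \int V|u|^2$ and $\mathcal{E}_{a}^{\rm H}\ge(1-a/a_*)\|(-\Delta+m^2)^{1/4}u\|_{L^2}^2+\int V|u|^2\ge(1-a/a_*)m$ are correct consequences of \eqref{ineq:GN}; the strict kinetic gap $\sqrt{|k|^2+m^2}>|k|$ does rule out minimizers at $a=a_*$; the direct method with tightness from $V\to\infty$, compact local embedding, Hardy--Littlewood--Sobolev via strong $L^{12/5}$ convergence, weak lower semicontinuity of the kinetic form and Fatou for $V$ gives existence for $a<a_*$; and left-continuity at $a_*$ follows from monotonicity (or concavity) of $a\mapsto E_a^{\rm H}$ together with the pointwise variational upper bound $E_a^{\rm H}\le\mathcal{E}_{a_*}^{\rm H}(u)+(a_*-a)\,D(u)/2$.

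The one step you should not wave through is the potential term of the trial state. The theorem assumes only $0\le V\in L^\infty_{\rm loc}$ with $V(x)\to\infty$, so $V$ may grow faster than any polynomial, while $Q$ decays only like $(1+|x|)^{-4}$ by \eqref{eq:decay}; hence $\int V|u_\lambda|^2$ can be $+\infty$ for every $\lambda$, and the ``approximate-identity'' limit $\int V|u_\lambda|^2\to V(x_0)$ is not justified as stated. The standard repair is to replace $Q$ by a renormalized compactly supported truncation $Q_R=\chi_R Q/\|\chi_R Q\|_{L^2}$, note that the $H^{1/2}$ and Coulomb quantities of $Q_R$ converge to those of $Q$ as $R\to\infty$, take $\lambda\to\infty$ at fixed $R$ (where the potential term is finite and converges, $V$ being locally bounded), and only then send $R\to\infty$; this salvages part (i) and the upper bound in part (ii). A minor related point: with $V$ merely in $L^\infty_{\rm loc}$ this construction reaches the essential infimum of $V$ rather than the pointwise infimum, which is harmless for the continuous potentials $V(x)=|x|^p$ used in the rest of the paper but worth stating if you keep the general hypothesis.
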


\begin{theorem}[Blow-up of the Hartree ground states]\label{thm:behavior-hartree-ground-states} 
	Assume that $m>0$ and $V(x)=|x|^p$ for some $p > 0$. Then for every sequence $\{a_{k}\}$ with $a_{k} \uparrow a_{*}$ as $k\to\infty$, we have 
	\begin{equation}\label{asymptotic:hartree-energy}
	E_{a_{k}}^{\rm H} = (a_{*}-a_{k})^{\frac{q}{q+1}} \Big(\frac{q+1}{q}\cdot\frac{\Lambda}{a_{*}} + o(1)_{k\to\infty}\Big)
	\end{equation}
	where $q=\min\{p,1\}$ and $\Lambda$ is given by \eqref{lambda-p}.
	
	In addition, assume that $u_{k}$ is a non-negative ground state of $E_{a_{k}}^{\rm H}$ for each $0<a_{k}<a_{*}$. Then there exist a subsequence of $\{a_{k}\}$ (still denoted by $\{a_{k}\}$) and a $Q\in \mathcal{G}$ such that the following strong convergences hold true in $H^{\frac{1}{2}}(\mathbb{R}^{3})$.
	\begin{itemize}
		\item[(i)] If $0<p\leq 1$ then
		\begin{equation} \label{blowup-p-1}
		\lim_{k\to\infty}\Lambda^{-\frac{3}{2}}(a_{*}-a_{k})^{\frac{3}{2(p+1)}}u_{k}(\Lambda^{-1}(a_{*}-a_{k})^{\frac{1}{p+1}}x) = Q(x).
		\end{equation}
		\item[(ii)] If $p>1$ then there exists a sequence $\{y_k\}\subset\mathbb R^3$ such that
		\begin{equation} \label{blowup-p-2}
		\lim_{k\to\infty}\Lambda^{-\frac{3}{2}}(a_{*}-a_{k})^{\frac{3}{4}}u_{k}(y_k+\Lambda^{-1}(a_{*}-a_{k})^{\frac{1}{2}}x) = Q(x).
		\end{equation}
	\end{itemize}

	Furthermore, if $V \equiv 0$ then we also obtain the convergence of the Hartree energy as in \eqref{asymptotic:hartree-energy} with $q=1$ and $\Lambda$ is given by the case $p>1$ of \eqref{lambda-p}, as well as the same convergence of the Hartree ground states as in \eqref{blowup-p-2}. Moreover, the optimal $W$ in \eqref{lambda-p} coincides with the $Q$ in \eqref{blowup-p-1} and \eqref{blowup-p-2}
\end{theorem}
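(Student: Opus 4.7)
The plan is to prove the energy asymptotic \eqref{asymptotic:hartree-energy} by a two-sided matching argument, and then upgrade it to the convergence of ground states by a rescaling-compactness analysis.

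\emph{Upper bound.} For any $W \in \mathcal{G}$ and $\ell > 0$, I would use the concentrated trial function $u_\ell(x) = \ell^{3/2} W(\ell x)$, which has $\|u_\ell\|_{L^2} = 1$ by $\|W\|_{L^2}=1$. Using the scaling of each term together with the operator identity $\sqrt{-\Delta+m^2} = \sqrt{-\Delta} + \tfrac{m^2}{2}(-\Delta)^{-1/2} + O((-\Delta)^{-3/2})$ (applied through its spectral-calculus Taylor expansion, with a careful control of the remainder using the decay \eqref{eq:decay}), and normalizing $W$ by \eqref{eq:GN-optimizer}, one obtains
\[
\mathcal{E}_{a_k}^{\rm H}(u_\ell) = \frac{(a_*-a_k)\,\ell}{a_*} + \frac{m^2}{2\ell}\,\|(-\Delta)^{-1/4}W\|_{L^2}^2 + \ell^{-p}\int_{\mathbb{R}^3}|x|^p |W(x)|^2\,{\rm d}x + O(\ell^{-3}).
\]
Choosing $\ell = \ell_k = \Lambda(a_*-a_k)^{-1/(q+1)}$ balances the leading terms (the kinetic–interaction deficit with the potential term when $p<1$, with the relativistic correction when $p>1$, and with both when $p=1$), and then infimizing over $W \in \mathcal{G}$ gives the upper bound in \eqref{asymptotic:hartree-energy}.

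\emph{Lower bound.} I would combine Gagliardo--Nirenberg \eqref{ineq:GN} with the operator inequality $\sqrt{-\Delta+m^2} \geq \sqrt{-\Delta} + \tfrac{m^2}{2}(-\Delta+m^2)^{-1/2}$ (an elementary pointwise estimate in spectral variables) to get, for any $\|u\|_{L^2}=1$,
\[
\mathcal{E}_{a_k}^{\rm H}(u) \geq \frac{a_*-a_k}{a_*}\,\|(-\Delta)^{1/4}u\|_{L^2}^2 + \frac{m^2}{2}\,\langle u,(-\Delta+m^2)^{-1/2}u\rangle + \int_{\mathbb{R}^3}|x|^p|u(x)|^2\,{\rm d}x .
\]
Then for a ground state $u_k$, the upper bound forces $\|(-\Delta)^{1/4}u_k\|_{L^2}^2 \lesssim \ell_k$ (diverging), so it is natural to rescale $v_k(x) := \ell_k^{-3/2} u_k(\ell_k^{-1}(x - \ell_k y_k))$ for a translation $y_k$ (which is zero when $p\leq 1$ and chosen below when $p>1$). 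The rescaled inequality reads
\[
\ell_k^{-1} E_{a_k}^{\rm H} \geq \frac{a_*-a_k}{a_*}\,\|(-\Delta)^{1/4}v_k\|_{L^2}^2 + \frac{m^2}{2\ell_k^{2}}\,\|(-\Delta)^{-1/4}v_k\|_{L^2}^2 + \ell_k^{-(p+1)}\int |x+\ell_k y_k|^p|v_k|^2\,{\rm d}x + o(1),
\]
and minimizing this expression using Young's inequality $\alpha t + \beta t^{-p} \geq \tfrac{p+1}{p}(\alpha^p\beta p)^{1/(p+1)}$ and the definition \eqref{lambda-p} of $\Lambda$ produces exactly the claimed matching lower bound.

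\emph{Convergence of ground states.} Saturation of all the preceding inequalities as $a_k \uparrow a_*$ forces (i) $\|(-\Delta)^{1/4}v_k\|_{L^2}^2 \to 1$ and equality in the Gagliardo--Nirenberg inequality in the limit; (ii) $\int |x|^p |v_k(x+\text{center})|^2\,{\rm d}x$ converges to $\int |x|^p |W|^2$ for an optimal $W \in \mathcal{G}$. Because $v_k$ is bounded in $H^{1/2}$, along a subsequence it converges weakly. The hard part will be upgrading to strong $H^{1/2}$-convergence: for $p \leq 1$ the potential term controls the tails of $|v_k|^2$ via Chebyshev, giving tightness and hence, together with the nonlocal Brezis--Lieb / concentration-compactness for the Riesz interaction, strong convergence along a subsequence, whose limit $Q$ must lie in $\mathcal{G}$ by the characterization of equality in \eqref{ineq:GN} (positive radial symmetry following from the fact that $u_k$ is nonnegative and Schwarz rearrangement decreases energy). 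For $p > 1$, the rescaled potential is negligible and a translation $y_k$ must be introduced to prevent vanishing; the center is located by profile decomposition or a Lieb-type translation lemma. Finally, the identification $W = Q$ in the $V\equiv 0$ case follows from taking $y_k=0$ and using the translation invariance.

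The main obstacles are: (a) the unknown uniqueness of $\mathcal{G}$, which forces us to work with subsequences and with the full set $\mathcal{G}$ in \eqref{lambda-p}; (b) the nonlocality of both the kinetic operator and the Coulomb term, which rules out pointwise/PDE methods and requires careful spectral and rearrangement arguments to control the remainder in the Taylor expansion of $\sqrt{-\Delta+m^2}$ at the concentration scale $\ell_k$; and (c) for $p>1$, the absence of a confining potential at scale $\ell_k$, which makes tightness a genuine concentration-compactness issue that must be handled by an explicit translation step.
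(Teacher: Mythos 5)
Your overall strategy (trial states for the upper bound, Gagliardo--Nirenberg plus a spectral expansion of $\sqrt{-\Delta+m^2}$ for the lower bound, then compactness of the rescaled minimizers) is the natural variational route, and it is worth noting that the paper itself does not prove this theorem: it recalls it from \cite{LiYa-87,GuZe-17,Ng-17-H,Ng-17-boson,YaYa-17}, where the proof goes through a detailed analysis of the Euler--Lagrange equation of the \emph{exact} minimizers; the variational scheme you sketch is essentially the one the paper runs for Theorem \ref{thm:behavior-approximate-hartree-ground-states}, but there the energy asymptotics \eqref{asymptotic:hartree-energy} are taken as an input, so you cannot borrow that proof verbatim without circularity. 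The genuine gap is in your lower bound. Once you apply \eqref{ineq:GN} to absorb the interaction, all information linking $v_k$ to $\mathcal{G}$ is discarded: what Young's inequality then gives is
$\frac{a_*-a_k}{a_*}T_k\,\ell+\ell^{-p}P_k\ \geq\ \frac{p+1}{p}\bigl(\tfrac{a_*-a_k}{a_*}\bigr)^{\frac{p}{p+1}}\,(p\,T_k^{\,p}P_k)^{\frac{1}{p+1}}$,
with $T_k=\|(-\Delta)^{1/4}v_k\|_{L^2}^2$ and $P_k=\int|x|^p|v_k|^2$, and $T_k^{\,p}P_k$ is only controlled from below by the \emph{unconstrained} infimum $\inf\{\|(-\Delta)^{1/4}v\|_{L^2}^{2p}\int|x|^p|v|^2:\|v\|_{L^2}=1\}$. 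Its optimizer is the ground state of a linear operator $\sqrt{-\Delta}+\lambda|x|^p$, not an element of $\mathcal{G}$ (which solves the nonlinear equation \eqref{eq:massless boson star}), so this infimum is in general strictly below $\inf_{W\in\mathcal{G}}\int|x|^p|W|^2=\Lambda^{p+1}/(a_*p)$; the same objection applies for $p\geq1$ with $\|(-\Delta)^{-1/4}v_k\|_{L^2}^2$ in place of (or together with) $P_k$. Hence "Young plus the definition \eqref{lambda-p} of $\Lambda$" yields the correct order $(a_*-a_k)^{q/(q+1)}$ but \emph{not} the sharp constant, and your subsequent "saturation" step, which deduces $T_k\to1$ and near-equality in \eqref{ineq:GN} from the matching of the two bounds, is circular because the matching was never established.

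The missing ingredient is to use minimality before, not after, the constant is extracted: from the upper bound alone one gets that the deficit $\|(-\Delta)^{1/4}v_k\|_{L^2}^2-\frac{a_k}{2}\iint|v_k(x)|^2|v_k(y)|^2|x-y|^{-1}$ is $O(a_*-a_k)$ together with $T_k,P_k\leq C$; then compactness (the compact resolvent of $\sqrt{-\Delta}+|x|^p$ for $0<p\leq1$, concentration-compactness with translations $y_k$ for $p>1$ or $V\equiv0$), the Hardy--Littlewood--Sobolev continuity of the Coulomb term along $L^r$-convergence for $2\leq r<3$, and Fatou force the limit of $v_k$ to be $b^{3/2}\mathcal{Q}(b\cdot+x_0)$ with $\mathcal{Q}\in\mathcal{G}$; only at that stage does the one-parameter optimization, with coefficients $\|(-\Delta)^{1/4}\mathcal{Q}\|_{L^2}^2=1$, $\int|x|^p\mathcal{Q}^2$, $\|(-\Delta)^{-1/4}\mathcal{Q}\|_{L^2}^2$, together with the rearrangement inequality to remove $x_0$, produce the constant $\Lambda$ and pin down $b=1$, exactly as in the proof of Theorem \ref{thm:behavior-approximate-hartree-ground-states}. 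Two smaller points: (a) in the lower bound for $p\geq1$ you replace $\frac{m^2}{2}\langle u,(-\Delta+m^2)^{-1/2}u\rangle$ by $\frac{m^2}{2}\|(-\Delta)^{-1/4}u\|_{L^2}^2$, but the operator inequality goes the wrong way for that substitution ($(-\Delta+m^2)^{-1/2}\leq(-\Delta)^{-1/2}$); one recovers the limit only via Fatou in Fourier variables after the compactness step; (b) in the upper bound the $O(\ell^{-3})$ remainder is both unnecessary (the one-sided bound $\sqrt{t+m^2}\leq\sqrt{t}+m^2/(2\sqrt{t})$ suffices) and problematic as stated, since $\langle W,(-\Delta)^{-3/2}W\rangle$ diverges logarithmically at low frequency because $\widehat{W}(0)\neq0$.
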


	In \cite{GuZe-17,Ng-17-H,Ng-17-boson,YaYa-17}, the proof of Theorem \ref{thm:behavior-hartree-ground-states} is based on a detailed analysis of the Euler--Lagrange equation for the \emph{exact} Hartree ground states. The aim of this section is to extend this result to the general \emph{approximate} Hartree ground states, which will be needed in the proof of Theorem \ref{thm:behavior-Nbody}. We have the following

\begin{theorem}[Blow-up of the approximate Hartree ground states] \label{thm:behavior-approximate-hartree-ground-states}
	Assume that $m>0$ and $V(x)=|x|^p$ for some $0<p\leq 1$. Let $a_{k} \uparrow a_{*}$ as $k\to\infty$ and let $u_{k} \in H^{\frac{1}{2}}(\mathbb{R}^3)$ be a sequence of non-negative functions such that $\|u_{k}\|_{L^2}=1$ and 
	\begin{equation}\label{eq:hartree-approximate-1}
	\lim_{k\to\infty}\frac{\mathcal{E}_{a_{k}}^{\rm H} (u_{k})}{E_{a_{k}}^{\rm H}}=1.
	\end{equation}
	Then there exists a $Q\in\mathcal{G}$ such that, up to extraction of a subsequence,
	$$
	\lim_{k\to\infty}\Lambda^{-\frac{3}{2}}(a_{*}-a_{k})^{\frac{3}{2(p+1)}}u_{k}(\Lambda^{-1}(a_{*}-a_{k})^{\frac{1}{p+1}}x)=Q(x)
	$$
	strongly in $L^{2}(\mathbb R^3)$, where $\Lambda$ is given by \eqref{lambda-p}.
\end{theorem}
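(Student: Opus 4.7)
The plan is to work with the $L^2$-preserving rescaling $w_k(x) := \ell_k^{3/2} u_k(\ell_k x)$, where $\ell_k = \Lambda^{-1}(a_* - a_k)^{1/(p+1)}$ as in \eqref{eq:scaling-length}. A Fourier-side change of variables gives the identity
$$
\ell_k\,\mathcal{E}_{a_k}^{\rm H}(u_k) = \bigl\langle w_k,\, \sqrt{-\Delta + m^2\ell_k^2}\,w_k\bigr\rangle - \frac{a_k}{2}\,D(w_k) + \ell_k^{p+1}\int_{\mathbb{R}^3}|x|^p |w_k|^2\,dx,
$$
with $D(w):=\iint |w(x)|^2|w(y)|^2|x-y|^{-1}\,dx\,dy$ and $\ell_k^{p+1}=\Lambda^{-(p+1)}(a_*-a_k)$. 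Combining the operator inequality $\sqrt{-\Delta + m^2\ell_k^2}\geq\sqrt{-\Delta}$ with the Gagliardo--Nirenberg splitting
$$
\|(-\Delta)^{1/4}w_k\|_{L^2}^2 - \frac{a_k}{2}D(w_k) = \frac{a_*-a_k}{a_*}\|(-\Delta)^{1/4}w_k\|_{L^2}^2 + \frac{a_k}{a_*}\,\delta_k,
\qquad \delta_k := \|(-\Delta)^{1/4}w_k\|_{L^2}^2 - \frac{a_*}{2}D(w_k)\geq 0,
$$
and dividing by $a_*-a_k$, I obtain
$$
\frac{\|(-\Delta)^{1/4}w_k\|_{L^2}^2}{a_*} + \frac{a_k\,\delta_k}{a_*(a_*-a_k)} + \Lambda^{-(p+1)}\int|x|^p|w_k|^2 \leq \frac{\ell_k\,\mathcal{E}_{a_k}^{\rm H}(u_k)}{a_*-a_k}.
$$
By the hypothesis \eqref{eq:hartree-approximate-1} and the energy asymptotic \eqref{asymptotic:hartree-energy} from Theorem \ref{thm:behavior-hartree-ground-states} (with $q=p$), the right-hand side converges to $(p+1)/(p\,a_*)$.

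Since each of the three non-negative terms on the left must stay bounded uniformly in $k$, I read off: $\|(-\Delta)^{1/4}w_k\|_{L^2}$ and $\int|x|^p|w_k|^2$ are bounded, and, crucially, $\delta_k\to 0$ because the prefactor $a_k/(a_*-a_k)$ diverges. The first two bounds yield, up to subsequence, $w_k\rightharpoonup w$ weakly in $H^{1/2}(\mathbb{R}^3)$; combining Rellich's embedding with tightness from the $p$-th moment bound promotes this to $w_k\to w$ strongly in $L^2(\mathbb{R}^3)$, so $\|w\|_{L^2}=1$ and $w\geq 0$. Strong $L^2$ convergence together with the $H^{1/2}$ bound gives, via $L^2$--$L^3$ interpolation and Hardy--Littlewood--Sobolev, $D(w_k)\to D(w)$. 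Then $\delta_k\to 0$ forces $\|(-\Delta)^{1/4}w_k\|_{L^2}^2\to\frac{a_*}{2}D(w)$; weak lower semicontinuity and the Gagliardo--Nirenberg inequality \eqref{ineq:GN} applied to $w$ together give the equality $\|(-\Delta)^{1/4}w\|_{L^2}^2=\frac{a_*}{2}D(w)$. Hence $w$ is a non-negative optimizer of \eqref{ineq:GN} with unit $L^2$-norm, so $w(x)=\lambda^{3/2}Q(\lambda(x-y))$ for some $Q\in\mathcal{G}$, $\lambda>0$, $y\in\mathbb{R}^3$.

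To pin down $\lambda$, $y$, and $Q$, I take $k\to\infty$ in the key inequality, using Fatou's lemma for the moment term, which yields
$$
\frac{\lambda}{a_*} + \Lambda^{-(p+1)}\int_{\mathbb{R}^3}|z/\lambda+y|^p|Q(z)|^2\,dz \leq \frac{p+1}{p\,a_*}.
$$
Since $Q$ is positive radially symmetric decreasing and $|\cdot|^p$ is radial non-decreasing, a layer-cake/Riesz rearrangement gives $\int|z/\lambda+y|^p|Q|^2\geq \lambda^{-p}\int|z|^p|Q|^2$, with equality only if $y=0$. The one-variable function $\lambda\mapsto \lambda/a_* + \lambda^{-p}\Lambda^{-(p+1)}\int|z|^p|Q|^2$ is strictly convex with unique minimum $(p+1)\lambda_*/(p\,a_*)$ attained at $\lambda_*^{p+1}=p\,a_*\,\Lambda^{-(p+1)}\int|z|^p|Q|^2$. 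The definition \eqref{lambda-p} of $\Lambda$ implies $\lambda_*\geq 1$, with equality iff $Q$ attains that infimum. The full chain of inequalities must therefore collapse to equalities: $y=0$, $\lambda=\lambda_*=1$, and $Q$ is a $\Lambda$-minimizer in \eqref{lambda-p}. This gives $w=Q\in\mathcal{G}$ and hence the claimed strong $L^2$ convergence.

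The main obstacle I anticipate is the borderline case $p=1$, where the relativistic correction $\sqrt{-\Delta+m^2\ell_k^2}-\sqrt{-\Delta}$ contributes at the same scale $\ell_k^2\sim a_*-a_k$ as the trapping potential $\ell_k^{p+1}$, and $\Lambda$ in \eqref{lambda-p} incorporates both effects. Handling this requires sharpening the kinetic lower bound to $\sqrt{-\Delta+m^2\ell_k^2}\geq \sqrt{-\Delta}+m^2\ell_k^2/(2\sqrt{-\Delta+m^2\ell_k^2})$ and carrying the resulting $\|(-\Delta)^{-1/4}w_k\|_{L^2}^2$ through all the limits (noting it is finite because of the $H^{1/2}$ bound together with Hardy's inequality), so that the rearrangement/minimization step in the third paragraph still closes.
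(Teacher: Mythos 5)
Your proposal is correct and follows essentially the same route as the paper's own proof: rescale by $\ell_k$, combine the Gagliardo--Nirenberg splitting with the energy asymptotics \eqref{asymptotic:hartree-energy} to obtain the $H^{\frac{1}{2}}$ and moment bounds together with $\delta_k\to 0$, extract a strong $L^2$ limit which must be a unit-norm optimizer of \eqref{ineq:GN}, and then pin down the dilation and translation via the rearrangement inequality and the one-variable minimization built into the definition \eqref{lambda-p} of $\Lambda$. Your explicit handling of the $p=1$ case, where the mass correction $\sqrt{-\Delta+m^2\ell_k^2}\geq \sqrt{-\Delta}+\tfrac{m^2\ell_k^2}{2}(-\Delta+m^2\ell_k^2)^{-1/2}$ must be retained so that the term $\frac{m^2 a_*}{2}\|(-\Delta)^{-\frac{1}{4}}W\|_{L^2}^2$ in $\Lambda$ survives the limit, is a welcome refinement of a point the paper's written proof passes over silently.
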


\begin{proof}
	Denote $\tilde{u}_k=\ell_{k}^{-\frac{3}{2}}u_{k}(\ell_{k}^{-1}\cdot)$ with $\ell_{k} = \Lambda (a_{*}-a_{k})^{-\frac{1}{p+1}}$, then $\|\tilde{u}_k\|_{L^{2}}=1$. By the interpolation inequality \eqref{ineq:GN}, we have
	\begin{equation}\label{eq:hartree-approximate-2}
	\mathcal{E}_{a_{k}}^{\rm H}(u_{k}) = \mathcal{E}_{a_{k}}^{\rm H}(\ell_{k}^{\frac{3}{2}}\tilde{u}_k(\ell_{k}\cdot)) \geq \ell_{k}\frac{a_{*}-a_{k}}{a_{*}}\|(-\Delta)^{\frac{1}{4}}\tilde{u}_k\|_{L^{2}}^2+\frac{1}{\ell_{k}^{p}}\int_{\mathbb{R}^{3}}V(x)|\tilde{u}_k(x)|^{2}{\rm d}x.
	\end{equation}
	Combining \eqref{eq:hartree-approximate-2} with the assumption \eqref{eq:hartree-approximate-1} and the asymptotic formula of $E_{a_{k}}^{\rm H}$ in \eqref{asymptotic:hartree-energy}, we deduce that
	\begin{equation}\label{conv:approximate}
	\frac{p+1}{p} \cdot \frac{\Lambda}{a_{*}} + o(1)_{k\to\infty} \geq \frac{\Lambda}{a_{*}}\|(-\Delta)^{\frac{1}{4}}\tilde{u}_k\|_{L^{2}}^2 + \frac{1}{\Lambda^p}\int_{\mathbb{R}^{3}}V(x)|\tilde{u}_k(x)|^{2}{\rm d}x.
	\end{equation}
	This implies that $\left\langle\tilde{u}_k,(\sqrt{-\Delta}+V)\tilde{u}_k\right\rangle$ is bounded uniformly in $k$. Since $\sqrt{-\Delta}+V$ has compact resolvent, we deduce from the Banach--Alaoglu theorem and Sobolev's embedding that, up to extraction of a subsequence, $\tilde{u}_k$ converges to a function $W$ weakly in $H^{\frac{1}{2}}(\mathbb{R}^{3})$, strongly in $L^{r}(\mathbb{R}^{3})$ for $2\leq r<3$ and pointwise almost everywhere in $\mathbb{R}^{3}$. In particular, we have $\|W\|_{L^2}=1$ since $\tilde{u}_k\to W$ strongly in $L^{2}(\mathbb{R}^{3})$. Moreover, we have
	\begin{equation}\label{eq:boson-star-optimizer-0}
	\|(-\Delta)^{\frac{1}{4}}\tilde{u}_k\|_{L^{2}}^2-\frac{a_{k}}{2}\iint_{\mathbb{R}^{3}\times\mathbb{R}^{3}}\frac{|\tilde{u}_k(x)|^2|\tilde{u}_k(y)|^{2}}{|x-y|}{\rm d}x{\rm d}y \leq \ell_{k}^{-1} \mathcal{E}_{a_{k}}^{\rm H}(u_{k}).
	\end{equation}
	By taking the limit $k\to\infty$ in \eqref{eq:boson-star-optimizer-0} and using Fatou's lemma and the Hardy--Littlewood--Sobolev inequality (see \cite[Theorem 4.3]{LiLo}), we arrive at
	\begin{equation}\label{eq:boson-star-optimizer-1}
	\|(-\Delta)^{\frac{1}{4}}W\|_{L^{2}}^2-\frac{a_{*}}{2}\iint_{\mathbb{R}^{3}\times\mathbb{R}^{3}}\frac{|W(x)|^2|W(y)|^{2}}{|x-y|}{\rm d}x{\rm d}y \leq 0.
	\end{equation}
	Here we have used the assumption \eqref{eq:hartree-approximate-1} and the asymptotic formula of $E_{a_{k}}^{\rm H}$ in \eqref{asymptotic:hartree-energy} for the term on the right hand side of \eqref{eq:boson-star-optimizer-0}. In view of \eqref{ineq:GN} and the fact that $\|W\|_{L^2}=1$, the equality in  \eqref{eq:boson-star-optimizer-1} must occur. Thus $W$ is an optimizer for \eqref{ineq:GN} and it satisfies \eqref{eq:massless boson star}, after suitable scaling. Hence
	$$
	W(x)=b^{\frac{3}{2}}\mathcal{Q}(bx+x_0)
	$$
	for some $b>0$, $x_0\in\mathbb R^3$, and for $\mathcal{Q}\in H^{\frac{1}{2}}(\mathbb{R}^{3})$ a positive radially symmetric decreasing solution of the equation \eqref{eq:massless boson star}.
	
	We will show that $\mathcal{Q}\in\mathcal{G}$ and that $W \equiv \mathcal{Q}$. We first deduce from $\|W\|_{L^2}=1$ and the equality in \eqref{eq:boson-star-optimizer-1} that $\|\mathcal{Q}\|_{L^2}=1$ and $\mathcal{Q}$ satisfies
	\begin{equation}\label{eq:boson-star-optimizer-2}
	\|(-\Delta)^{\frac{1}{4}}\mathcal{Q}\|_{L^{2}}^2 = \frac{a_{*}}{2}\iint_{\mathbb{R}^{3}\times\mathbb{R}^{3}}\frac{|\mathcal{Q}(x)|^2|\mathcal{Q}(y)|^{2}}{|x-y|}{\rm d}x{\rm d}y.
	\end{equation}
	Since $\mathcal{Q}$ solves the equation \eqref{eq:massless boson star}, we then deduce from \eqref{eq:massless boson star}, \eqref{eq:boson-star-optimizer-2} and $\|\mathcal{Q}\|_{L^2}=1$ that $\mathcal{Q}$ satisfies \eqref{eq:GN-optimizer}. Hence, $\mathcal{Q}\in\mathcal{G}$. Now we prove $W\equiv \mathcal{Q}$ by showing that $b=1$ and $x_0=0$. Taking the limit $k\to\infty$ in \eqref{conv:approximate} we get
	\begin{align}
	\frac{p+1}{p} \cdot \frac{\Lambda}{a_{*}} & \geq \frac{\Lambda}{a_{*}}\|(-\Delta)^{\frac{1}{4}}W\|_{L^{2}}^2 + \frac{1}{\Lambda^p}\int_{\mathbb{R}^{3}}V(x)|W(x)|^{2}{\rm d}x \nonumber \\
	& = \frac{\Lambda b}{a_{*}}\|(-\Delta)^{\frac{1}{4}}\mathcal{Q}\|_{L^{2}}^2 + \frac{1}{\Lambda^p b^p}\int_{\mathbb{R}^{3}}V(x-x_0)|\mathcal{Q}(x)|^{2}{\rm d}x. \label{eq:boson star optimizer-3}
	\end{align}
	Note that $\|(-\Delta)^{\frac{1}{4}}\mathcal{Q}\|_{L^{2}}=1$, by \eqref{eq:GN-optimizer}. Moreover,
	\begin{equation}\label{ineq:rearrangement}
	\int_{\mathbb{R}^{3}}V(x-x_0)|\mathcal{Q}(x)|^{2}{\rm d}x \geq \int_{\mathbb{R}^{3}}V(x)|\mathcal{Q}(x)|^{2}{\rm d}x
	\end{equation}
	by the rearrangement inequality as $\mathcal{Q}$ is symmetric decreasing and $V$ is strictly symmetric increasing (see \cite[Chapter 3]{LiLo}). Thus
	\begin{equation}\label{eq:boson star optimizer-4}
	\frac{p+1}{p} \cdot \frac{\Lambda}{a_{*}} \geq \frac{\Lambda b}{a_{*}} + \frac{1}{\Lambda^p b^p}\int_{\mathbb{R}^{3}}V(x)|\mathcal{Q}(x)|^{2}{\rm d}x.
	\end{equation}
	On the other hand, it is elementary to check that
	$$
	\inf_{t>0}\Big(\frac{t}{a_{*}} + \frac{1}{t^p}\int_{\mathbb{R}^{3}}V(x)|\mathcal{Q}(x)|^{2}{\rm d}x\Big) = \frac{p+1}{p} \cdot \frac{\Lambda}{a_{*}}
	$$
	with the unique optimal value $t=\Lambda$. Therefore, the equality in \eqref{eq:boson star optimizer-4} must occur. Hence $b=1$. This also implies that the equality in \eqref{ineq:rearrangement} must occur, and $x_0=0$.
\end{proof}

\begin{remark}
	The result in Theorem \ref{thm:behavior-approximate-hartree-ground-states} can be extended to the cases $V \equiv 0$ and $V(x)=|x|^{p}$ with $p>1$. In both cases, we will need a concentration-compactness argument \cite{Li-84} to deal with the lack of compactness. We then find that there exist a $Q\in\mathcal{G}$ and a sequence $\{y_k\}\subset\mathbb R^3$ such that, up to extraction of a subsequence,
	$$
	\lim_{k\to\infty}\Lambda^{-\frac{3}{2}}(a_{*}-a_{k})^{\frac{3}{4}}u_{k}(y_k+\Lambda^{-1}(a_{*}-a_{k})^{\frac{1}{2}}x)=Q(x)
	$$
	strongly in $L^{2}(\mathbb R^3)$, where $\Lambda$ is given by the case $p>1$ of \eqref{lambda-p}.
\end{remark}

\section{\label{sec:energy estimates}Quantum energy estimates}
In this section we settle some energy estimates for the ground state energy and the ground states. Using the ideas of the proof of \eqref{fundamental} in \cite{LiYa-87} we have the following asymptotic formula of the quantum energy

\begin{lemma}\label{lem:many-body-energy}
	Assume that $m>0$ and $V(x)=|x|^p$ for some $p > 0$. Let $a_{N} = a_{*} - N^{-\alpha}$ with $0<\alpha <1/3$. Then we have
	\begin{equation}\label{asymptotic:many-body-energy-2}
	E_{N}^{\rm Q} = E_{a_{N}}^{\rm H} \big(1+o(1)_{N\to\infty}\big) = (a_{*}-a_{N})^{\frac{q}{q+1}} \Big(\frac{q+1}{q}\cdot\frac{\Lambda}{a_{*}} + o(1)_{N\to\infty}\Big)
	\end{equation}
	where $q=\min\{p,1\}$. Furthermore, if $V \equiv 0$ then we also obtain \eqref{asymptotic:many-body-energy-2} with $q=1$ and $\Lambda$ is given by the case $p > 1$ of \eqref{lambda-p}.
\end{lemma}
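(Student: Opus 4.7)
The second equality in \eqref{asymptotic:many-body-energy-2} is immediate from Theorem \ref{thm:behavior-hartree-ground-states} (and its $V\equiv 0$ counterpart) applied to the explicit sequence $a_{N}=a_{*}-N^{-\alpha}$. The real content is therefore the first equality $E_{N}^{\rm Q}=E_{a_{N}}^{\rm H}(1+o(1))$, which I would split into the two standard bounds.

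\textbf{Upper bound.} Since $a_{N}<a_{*}$, Theorem \ref{thm:existence of ground states}(iii) provides a normalized Hartree ground state $u_{N}\in H^{1/2}(\mathbb{R}^{3})$. The product $u_{N}^{\otimes N}$ is a legitimate bosonic trial function, and thanks to the factor $1/(N-1)$ in front of the pair interaction, a direct computation gives, with no error,
\[
E_{N}^{\rm Q}\leq N^{-1}\langle u_{N}^{\otimes N},H_{N}u_{N}^{\otimes N}\rangle = \mathcal{E}_{a_{N}}^{\rm H}(u_{N}) = E_{a_{N}}^{\rm H}.
\]

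\textbf{Lower bound.} This is the main content. Following the Lieb--Yau strategy, I would rewrite, for a many-body ground state $\Psi_{N}$,
\[
\frac{\langle\Psi_{N},H_{N}\Psi_{N}\rangle}{N} = \tr\bigl[(\sqrt{-\Delta+m^{2}}+V)\gamma_{\Psi_{N}}^{(1)}\bigr]-\frac{a_{N}}{2}\tr\bigl[|x-y|^{-1}\gamma_{\Psi_{N}}^{(2)}\bigr],
\]
rescale by $\ell_{N}$ to reach the collapse length (so that the rescaled one-body operator becomes $\sqrt{-\Delta+m^{2}\ell_{N}^{2}}+\ell_{N}^{p+1}|x|^{p}$ and the coupling stays $a_{N}$ but on an $O(1)$ interaction), and then apply the quantitative finite-dimensional quantum de Finetti theorem to the rescaled two-particle reduced density matrix $\gamma_{\Phi_{N}}^{(2)}$ after projecting onto a spectral subspace $V_{L}$ of the rescaled one-body operator. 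This yields a Borel probability measure $\mu_{N}$ on the unit sphere of $H^{1/2}(\mathbb{R}^{3})$ with, up to controlled errors,
\[
\frac{\langle\Psi_{N},H_{N}\Psi_{N}\rangle}{N}\geq \int\mathcal{E}_{a_{N}}^{\rm H}(u)\,{\rm d}\mu_{N}(u)-\mathrm{err}(N,L)\geq E_{a_{N}}^{\rm H}-\mathrm{err}(N,L).
\]
Here $\mathrm{err}(N,L)$ combines the cost of the spectral truncation (to be absorbed by a second-moment estimate on the kinetic energy per particle, proved elsewhere in this section) with the de Finetti error, proportional to $\dim(V_{L})/N$ on the natural energy scale $\ell_{N}^{-1}$; the relativistic two-particle operator bound from Appendix \ref{app:operator-bound} is what makes the projection step quantitative.

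\textbf{Main obstacle.} The delicate step is choosing $L=L(N)$ so that both contributions to $\mathrm{err}(N,L)$ are simultaneously $o(E_{a_{N}}^{\rm H})=o(N^{-\alpha q/(q+1)})$, bearing in mind that the reference energy itself vanishes as $N\to\infty$. Carrying out this optimization is precisely where the assumption $\alpha<1/3$ enters sharply: it is what guarantees that the de Finetti loss, inflated by $\ell_{N}^{-1}=\Lambda^{-1}N^{\alpha/(q+1)}$, still decays faster than $E_{a_{N}}^{\rm H}$. The case $V\equiv 0$ is handled by the same argument with $q=1$ and the corresponding form of $\Lambda$, the only difference being that the truncation of the one-body operator has to be done in momentum rather than in a spectral subspace with compact resolvent.
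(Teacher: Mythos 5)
Your upper bound coincides with the paper's (the variational principle with $u_N^{\otimes N}$), but your lower bound takes a route that does not work here, for two concrete reasons. First, it is circular within the logical structure of this section: the second-moment estimate $\tr(h\otimes h\,\gamma_{\Psi_N}^{(2)})\leq C(a_*-a_N)^{-2/(q+1)}$ that you want to feed into the de Finetti error is itself proved by using the asymptotic formulas for $E_N^{\rm Q}$ and for the ground state energy $E_{N,\varepsilon}^{\rm Q}$ of the modified Hamiltonian -- i.e.\ by using precisely the lemma you are proving. Second, even if one granted those moment bounds, the quantitative de Finetti route does not reach the full range $\alpha<1/3$. With the projector $P=\mathbbm{1}(h\leq L)$ one has $\dim(P\mathfrak{H})\leq CL^{3+3/p}$ (Lemma \ref{lem:dimension}), and the two error terms in \eqref{lower-bound-finetti} become, after inserting the moments, $CL^{4+3/p}/N$ and $CN^{5\alpha/(4(p+1))}L^{-1/4}$. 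Demanding that both be $o\big(E_{a_N}^{\rm H}\big)=o\big(N^{-\alpha p/(p+1)}\big)$ and optimizing over $L$ forces exactly $\alpha<p/(17p+15)$, which is strictly smaller than $1/3$ for every $p>0$; this is the reason that stronger restriction appears in Theorem \ref{thm:behavior-Nbody} for the convergence of states but not for the energy. So your claim that the hypothesis $\alpha<1/3$ is what makes the de Finetti loss negligible is not correct, and your argument would prove the lemma only in a smaller range of $\alpha$. In addition, for $V\equiv0$ the spectral subspace $\mathbbm{1}(\sqrt{-\Delta+m^2}\leq L)L^2(\mathbb{R}^3)$ is infinite dimensional, so Theorem \ref{quantitative-Finetti} is not applicable and a ``momentum truncation'' alone does not repair this.

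The paper's lower bound avoids de Finetti entirely at this stage. It invokes the Lieb--Yau one-body reduction of the pair interaction, \cite[eq.\ (2.33)]{LiYa-87}, which gives
\begin{equation*}
E_N^{\rm Q}\ \geq\ E_{a_N'}^{\rm H}-CN^{-1/3},\qquad a_N'=\frac{a_N}{(1-CN^{-1/3})(1-N^{-1/3})}<a_*,\quad 0<a_N'-a_N\leq CN^{-1/3},
\end{equation*}
and then compares $E_{a_N'}^{\rm H}$ with $E_{a_N}^{\rm H}$ using the Hartree asymptotics \eqref{asymptotic:hartree-energy} together with the elementary inequality $(a_*-a_N')^{\frac{q}{q+1}}-(a_*-a_N)^{\frac{q}{q+1}}\geq-(a_N'-a_N)^{\frac{q}{q+1}}$. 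This yields $E_N^{\rm Q}\geq E_{a_N}^{\rm H}\big(1-CN^{-\frac{q}{3(q+1)}}(a_*-a_N)^{-\frac{q}{q+1}}\big)$, and the relative error is $o(1)$ precisely because $a_*-a_N=N^{-\alpha}$ with $\alpha<1/3$; this is where the threshold $1/3$ actually comes from. The same argument covers $V\equiv0$ directly, since the Lieb--Yau bound is proved in that setting.
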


\begin{proof}
	The upper bound follows from the variational principle
	$$
	E_{N}^{\rm Q} \leq \inf_{u\in H^{\frac{1}{2}}(\mathbb R^3),\|u\|_{L^2}=1}\frac{\left\langle u^{\otimes N},H_{N} u^{\otimes N}\right\rangle}{N} = \inf_{u\in H^{\frac{1}{2}}(\mathbb R^3),\|u\|_{L^2}=1}\mathcal{E}_{a_{N}}^{\rm H}(u) = E_{a_{N}}^{\rm H}.
	$$
	The lower bound was proved by Lieb and Yau (see \cite[Proof of Theorem 2]{LiYa-87}). Their idea is to replace the two-particle interaction by a one-particle interaction. We refer to \cite{Le-15} where the trick is explained in more detailed (see also \cite{LeLe-69} for related arguments). More precisely, it follows from \cite[eq. (2.33)]{LiYa-87} that
	\begin{equation}\label{ineq:lower-bound-1}
	E_{N}^{\rm Q} \geq E_{a_{N}'}^{\rm H} - CN^{-\frac{1}{3}}
	\end{equation}
	where
	\begin{equation}\label{ineq:lower-bound-2}
	a_{N}' = \frac{a_{N}}{(1-CN^{-\frac{1}{3}})(1-N^{-\frac{1}{3}})} < a_{*}.
	\end{equation}
	We deduce from the asymtotic formula of $E_{a_{N}}^{\rm H}$ in \eqref{asymptotic:hartree-energy} that 
	\begin{align}
	E_{a_{N}'}^{\rm H} - E_{a_{N}}^{\rm H} & = \big((a_{*}-a_{N}')^{\frac{q}{q+1}} - (a_{*}-a_{N})^{\frac{q}{q+1}}\big) \Big(\frac{q+1}{q}\cdot\frac{\Lambda}{a_{*}} + o(1)_{N\to\infty}\Big) \nonumber \\
	& \geq -(a_{N}'-a_{N})^{\frac{q}{q+1}} \Big(\frac{q+1}{q}\cdot\frac{\Lambda}{a_{*}} + o(1)_{N\to\infty}\Big). \label{ineq:lower-bound-3}
	\end{align}
	On the other hand, it follows from the formula of $a_{N}'$ in \eqref{ineq:lower-bound-2} that, for large $N$,
	$$
	0< a_{N}' - a_{N} \leq CN^{-\frac{1}{3}}.
	$$
	Thus, we deduce from \eqref{ineq:lower-bound-1} and \eqref{ineq:lower-bound-3} that
	$$
	E_{N}^{\rm Q} \geq E_{a_{N}}^{\rm H} - C N^{-\frac{1}{3}\cdot\frac{q}{q+1}}\Big(\frac{q+1}{q}\cdot\frac{\Lambda}{a_{*}} + o(1)_{N\to\infty}\Big)  = E_{a_{N}}^{\rm H} \big(1 - C N^{-\frac{1}{3}\cdot\frac{q}{q+1}}(a_{*}-a_{N})^{-\frac{q}{q+1}}\big).
	$$
	The error term $N^{-\frac{1}{3}\cdot\frac{q}{q+1}}(a_{*}-a_{N})^{-\frac{q}{q+1}}$ is of order $1$ when $a_{*} - a_{N} = N^{-\alpha}$ for $0<\alpha < 1/3$.
	
	Finally, if $V \equiv 0$ then the asymptotic formula of the quantum energy follows from that of the Hartree energy, which was given in Theorem \ref{thm:behavior-hartree-ground-states}.
\end{proof}

	Lemma \ref{lem:many-body-energy} gave us the convergence of the ground state energy when $N\to\infty$ and $a_{N}$ approaches $a_{*}$ slowly. Note that the case $V \equiv 0$ is allowed in \eqref{asymptotic:many-body-energy-2} as well as in Theorem \ref{thm:behavior-hartree-ground-states} and \ref{thm:behavior-approximate-hartree-ground-states} for the convergence of the (approximate) Hartree ground states. However, in this case and also in the case $V(x) = |x|^{p}$ with $p>1$, the convergence of the (approximate) many-body ground states is a more complicated problem since we need more compactness. In fact, it might fail due to a superposition of them in the linear theory. 
	
	By using \eqref{asymptotic:many-body-energy-2} we now establish some moments estimates for the ground states, which will be useful to control the error made in the mean-field limit. Let us now introduce the following shorthand notation $$h := \sqrt{-\Delta+m^2}+V.$$ Note that $h \geq m>0$. We will need the following technical result, whose proof is given in Appendix \ref{app:operator-bound}.

\begin{lemma}[Operator bounds for two-particle interactions]\label{operator bound}
	We have
	\begin{align}
	|x-y|^{-1} &\leq C(-\Delta_{x})^{\frac{1}{4}}(-\Delta_{y})^{\frac{1}{4}} \label{bound-interaction-1}, \\
	\pm (h_x|x-y|^{-1} + |x-y|^{-1}h_x) & \leq C h_x h_y. \label{bound-interaction-2}
	\end{align}
\end{lemma}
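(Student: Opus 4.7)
The plan is to establish the two inequalities in order, using the first as input for the second.

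For \eqref{bound-interaction-1}, this is a two-body Hardy--Kato-type inequality, and I would pass to Fourier variables. Since $|x-y|^{-1}$ is translation invariant in the relative coordinate and its Fourier transform on $\mathbb R^3$ is $4\pi/|k|^2$, a short computation (writing $\widehat{|x-y|^{-1}}(p,q)\sim\delta(p+q)/|p|^2$ as a distribution on $\mathbb R^6$) gives
\begin{equation*}
\langle\psi,|x-y|^{-1}\psi\rangle = c\iiint \frac{\overline{\widehat\psi(p,q)}\,\widehat\psi(p-k,q+k)}{|k|^2}\,dp\,dq\,dk.
\end{equation*}
Applying Cauchy--Schwarz with the symmetric weight $(|p|/|q|)^{1/2}$ followed by a Schur-type estimate on the resulting kernel produces the bound against $\int|p|^{1/2}|q|^{1/2}|\widehat\psi(p,q)|^2\,dp\,dq=\langle\psi,(-\Delta_x)^{1/4}(-\Delta_y)^{1/4}\psi\rangle$. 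Alternatively, one could iterate the single-variable Kato/Herbst estimate $|x-y|^{-1}\le\frac{\pi}{2}(-\Delta_x)^{1/2}$ (valid for each fixed $y$ by translation invariance of the usual Kato inequality) together with its $y$-analogue and extract the geometric mean via an operator-interpolation argument.

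For \eqref{bound-interaction-2}, I would first upgrade \eqref{bound-interaction-1} to involve $h$. Since $V\ge 0$ and $m>0$, we have $h_x\ge\sqrt{-\Delta_x+m^2}\ge(-\Delta_x)^{1/2}$, and operator monotonicity of $t\mapsto t^{1/2}$ yields $h_x^{1/2}\ge(-\Delta_x)^{1/4}$ (similarly in $y$). Combined with \eqref{bound-interaction-1} and the fact that $h_x,h_y$ commute, this gives $W:=|x-y|^{-1}\le C h_x^{1/2}h_y^{1/2}$ as positive operators. Then for $\phi\in\mathfrak H^2$ I would estimate
\begin{equation*}
|\langle\phi,(h_xW+Wh_x)\phi\rangle|=2|\langle h_x^{1/2}\phi,h_x^{1/2}W\phi\rangle|\le \varepsilon\langle\phi,h_x\phi\rangle+\varepsilon^{-1}\langle\phi,Wh_xW\phi\rangle.
\end{equation*}
The first term is absorbed into $Ch_xh_y$ using $h_x\le m^{-1}h_xh_y$ (valid because $h_y\ge m$ and $h_x,h_y$ commute). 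The core estimate is then to show $Wh_xW\le C h_xh_y$, which I would again prove on the Fourier side by writing $Wh_xW$ as an explicit double convolution in the relative momentum and exploiting that $W$ only transfers momentum from one variable to the other, so that the $|p|$-weight coming from $h_x$ can be redistributed to produce $|p|^{1/2}|q|^{1/2}$ weights on both sides.

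The principal obstacle is the non-commutativity of the pseudo-differential operator $h_x$ with the singular multiplication operator $W=|x-y|^{-1}$. In particular the naive implication ``$W\le C h_x^{1/2}h_y^{1/2}\Rightarrow W^2\le C^2 h_xh_y$'' fails, since operator monotonicity of $t\mapsto t^2$ breaks down for non-commuting positive operators. This is exactly what makes the bound $Wh_xW\le Ch_xh_y$ the delicate step: it has to be proved by a direct Fourier-side/Schur-type computation that tracks the momentum transfer through $W$, rather than by scalar manipulations on the operator bound \eqref{bound-interaction-1}.
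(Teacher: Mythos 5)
Your route to \eqref{bound-interaction-1} is viable but genuinely different from the paper's: the paper first proves the stronger bound $|x-y|^{-4}\le 16\,(-\Delta_x)(-\Delta_y)$ on $C_c^{\infty}$ away from the diagonal, using Hardy's inequality \eqref{Hardy} successively in $x$ and $y$ together with $\Delta_x|x-y|^{-1}=0$ off the diagonal, and then descends to \eqref{bound-interaction-1} by operator monotonicity of $t\mapsto t^{1/4}$ plus a density/Fatou argument. Your Fourier--Schur computation and your Kato-plus-geometric-mean alternative can both be made to work, but as written they need repair: the weight $(|p|/|q|)^{1/2}$ does not produce a convergent Schur integral (one needs a symmetric weight of the form $\bigl(|p||q|/(|p'||q'|)\bigr)^{s}$ for a suitable $s$), and extracting the geometric mean of two quadratic-form bounds for unbounded, non-invertible operators requires an approximation/interpolation argument that should be spelled out. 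You also ignore the domain issue that the paper handles via the density lemma.

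The genuine gap is your core estimate for \eqref{bound-interaction-2}: the inequality $W h_x W\le C\,h_xh_y$ with $W=|x-y|^{-1}$ is false, so the $\varepsilon$-splitting step cannot be closed. Dimensionally $Wh_xW$ scales like $(\mathrm{length})^{-3}$ while $h_xh_y$ scales like $(\mathrm{length})^{-2}$: testing against $\phi_\lambda(x,y)=\lambda^{-3}\Phi(x/\lambda,y/\lambda)$ gives $\langle\phi_\lambda,W\sqrt{-\Delta_x}\,W\phi_\lambda\rangle\sim\lambda^{-3}$ while $\langle\phi_\lambda,h_xh_y\phi_\lambda\rangle\sim\lambda^{-2}$ as $\lambda\to0$ (mass and potential are lower order under this concentration), so no constant works. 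The potential part fails independently: since $V$ and $W$ commute, $WV(x)W=V(x)|x-y|^{-2}$, and a state localized at $|x|\sim R$ with $|x-y|\sim R^{-p}$ gives $\langle V(x)|x-y|^{-2}\rangle\sim R^{3p}$ versus $\langle h_xh_y\rangle\sim R^{2p}$; your momentum-redistribution argument addresses only the kinetic part of $h_x$ and ignores $V$ altogether. The correct fix is the paper's: prove the \emph{squared} inequality $(h_xW+Wh_x)^2\le C\,h_x^2h_y^2$, which via $(A+A^*)^2\le 2(AA^*+A^*A)$ reduces to $h_xW^2h_x+Wh_x^2W\le C\,h_x^2h_y^2$ --- these have consistent scaling and follow from Hardy's inequality in the $y$-variable and the identity $\Delta_x|x-y|^{-1}=0$ off the diagonal --- and then descend to \eqref{bound-interaction-2} by operator monotonicity of $t\mapsto\sqrt{t}$. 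You correctly observe that squaring is not operator monotone, but the resolution is to square the target inequality and take square roots afterwards, not to prove the unsquared bound $Wh_xW\le Ch_xh_y$, which cannot hold.
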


\begin{lemma}[Moments estimates]
	Let $V(x)=|x|^p$ for some $p > 0$ and let $a_{N} = a_{*} - N^{-\alpha}$ with $0<\alpha < 1/3$. Let $\Psi_N\in\mathfrak{H}^N$ be a ground state of $H_{N}$, which exists. Then we have
	\begin{equation}\label{estimate:moments}
	\tr \big(h\gamma_{\Psi_N}^{(1)}\big) \leq C(a_{*}-a_{N})^{-\frac{1}{q+1}},\quad \tr \big(h \otimes h \gamma_{\Psi_N}^{(2)}\big) \leq C(a_{*}-a_{N})^{-\frac{2}{q+1}}
	\end{equation}
	where $q=\min\{p,1\}$. Furthermore, if $0<p\leq 1$ then we have
	\begin{equation}\label{V-estimates}
	\tr \big(V\gamma_{\Psi_N}^{(1)}\big) \leq C(a_{*}-a_{N})^{\frac{p}{p+1}}.
	\end{equation}
\end{lemma}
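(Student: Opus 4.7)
The plan is to derive all three bounds from a single operator lower bound on $H_N$, Lemma \ref{operator bound}, and the eigenvalue equation $H_N\Psi_N=NE_N^{\rm Q}\Psi_N$. First I would establish the many-body Gagliardo--Nirenberg operator inequality on $\mathfrak{H}^N$,
\[
\sum_{i=1}^{N}\sqrt{-\Delta_{x_i}}\ \ge\ \frac{a_{*}}{N-1}\sum_{1\le i<j\le N}|x_i-x_j|^{-1},
\]
obtained by summing the $N=2$ version of \eqref{ineq:GN} (equivalent to GN applied to 2-particle Hartree states) over all pairs. Setting $\epsilon:=(a_*-a_N)/a_*$, $K:=\sum_i h_i$ and $W:=\frac{a_N}{N-1}\sum_{i<j}|x_i-x_j|^{-1}$, the bound $(a_N/a_*)\sum_i\sqrt{-\Delta_{x_i}+m^{2}}\ge(a_N/a_*)\sum_i\sqrt{-\Delta_{x_i}}\ge W$, combined with $V\ge 0$, gives the operator inequality $H_N\ge \epsilon K+(1-\epsilon)\sum_iV(x_i)\ge \epsilon K$ on $\mathfrak{H}^N$.

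Taking the expectation in $\Psi_N$ and using $\langle H_N\rangle = NE_N^{\rm Q}$ together with the energy asymptotic of Lemma \ref{lem:many-body-energy} immediately produces
\[
\tr(h\gamma_{\Psi_N}^{(1)})=\frac{\langle K\rangle}{N}\le \frac{E_N^{\rm Q}}{\epsilon}\le C(a_*-a_N)^{-\frac{1}{q+1}},
\]
since $\epsilon\sim a_*-a_N$ and $E_N^{\rm Q}\sim (a_*-a_N)^{q/(q+1)}$. In the case $0<p\le 1$ (so $q=p$), dropping the nonnegative $\epsilon K$ term in the operator lower bound yields $H_N\ge \sum_iV(x_i)$, whence $\tr(V\gamma_{\Psi_N}^{(1)})\le E_N^{\rm Q}\le C(a_*-a_N)^{p/(p+1)}$.

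For the two-body moment, the identity $\{K,H_N\}=2K^2-\{K,W\}$ together with the eigenvalue equation (which gives $\langle KH_N\rangle=\langle H_NK\rangle=NE_N^{\rm Q}\langle K\rangle$) yields
\[
\langle K^{2}\rangle = NE_N^{\rm Q}\langle K\rangle+\tfrac{1}{2}\langle \Psi_N,\{K,W\}\Psi_N\rangle.
\]
To bound the anticommutator I would split $\sum_{i}\sum_{j<k}\{h_i,|x_j-x_k|^{-1}\}$ according to whether $i\in\{j,k\}$: the first case is controlled directly by $Ch_jh_k$ via \eqref{bound-interaction-2}, while in the second case $h_i$ commutes with $|x_j-x_k|^{-1}$, and \eqref{bound-interaction-1} combined with operator monotonicity of $\sqrt{\cdot}$ and commuting AM--GM gives a bound by $Ch_i(h_j+h_k)/2$. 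Summing over all indices and using $\sum_{i\ne j}h_ih_j\le K^{2}$ produces $\pm\{K,W\}\le c\,a_N K^{2}+\text{l.o.t.}$, with $c$ the (sharp) constant arising from Lemma \ref{operator bound}. When $c$ is compatible with the GN constant one has $ca_N\le 2(1-\epsilon)$, so that $\epsilon\langle K^{2}\rangle\le NE_N^{\rm Q}\langle K\rangle+\text{l.o.t.}$ Inserting the first-moment bound and the energy asymptotic, $\langle K^{2}\rangle\le CN^{2}(a_*-a_N)^{-2/(q+1)}$, and the desired estimate follows from $\tr(h\otimes h\gamma_{\Psi_N}^{(2)})=[\langle K^{2}\rangle-N\tr(h^{2}\gamma_{\Psi_N}^{(1)})]/[N(N-1)]\le \langle K^{2}\rangle/[N(N-1)]$.

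The main obstacle is the sharpness of the constant in \eqref{bound-interaction-2}: in order for the leading anticommutator bound not to absorb the $\langle K^{2}\rangle$ term on the right side of the identity, one needs $ca_N<2$ with a margin of order $\epsilon$, which forces the appendix proof of Lemma \ref{operator bound} to exploit the sharp Gagliardo--Nirenberg constant $a_*$ rather than a generic Kato-type bound.
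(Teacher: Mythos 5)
Your whole scheme rests on the claimed operator inequality $\sum_{i}\sqrt{-\Delta_{x_i}}\ \ge\ \frac{a_{*}}{N-1}\sum_{i<j}|x_i-x_j|^{-1}$ ``obtained by summing the $N=2$ version of \eqref{ineq:GN} over all pairs'', and this step is not valid. The Gagliardo--Nirenberg inequality \eqref{ineq:GN} is a quartic, functional inequality that only controls Hartree (product) states; it does not upgrade to a quadratic operator inequality on the two-particle space. The sharp constant in the genuine two-body operator bound $\sqrt{-\Delta_x}+\sqrt{-\Delta_y}\ge \lambda\,|x-y|^{-1}$ is the Kato/Herbst constant $4/\pi$, which is \emph{strictly smaller} than $a_{*}$ (the paper recalls $4/\pi<a_{*}<2.7$); precisely this gap between the operator constant and the Hartree constant is why the Lieb--Yau analysis is nontrivial. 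Consequently your derivations of the first moment bound, of \eqref{V-estimates} (via ``$H_N\ge\sum_i V$''), and of the absorption step for $\langle K^2\rangle$ all lean on an inequality that is unjustified (and whose two-body version is false). The same obstruction kills your proposed fix for the second moment: if $|x-y|^{-1}\le c\,(-\Delta_x)^{1/4}(-\Delta_y)^{1/4}$, then AM--GM forces $c\ge\pi/2$, hence $c\,a_N/2\ge \pi a_{*}/4>1$; no proof of Lemma \ref{operator bound}, however sharp, can deliver the margin $c\,a_N<2$ you need, because a pairwise operator bound simply cannot ``see'' the constant $a_{*}$.

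The paper's proof supplies exactly the missing input from the many-body energy asymptotics of Lemma \ref{lem:many-body-energy} (i.e.\ the Lieb--Yau lower bound), not from a two-body operator inequality. For the first moment one writes $H_N=\varepsilon\sum_i h_i+(1-\varepsilon)H_{N,\varepsilon}$, where $H_{N,\varepsilon}$ has coupling $a_N/(1-\varepsilon)<a_{*}$ and $2\varepsilon=(a_{*}-a_N)/a_{*}$, and uses $E^{\rm Q}_{N,\varepsilon}\ge 0$ (from Lemma \ref{lem:many-body-energy}) rather than an operator positivity. For the second moment, in the anticommutator $\{K,H_N\}$ the terms with $i\notin\{j,k\}$ are \emph{not} estimated pairwise: one multiplies the nonnegative $(N-1)$-body operator $(1-\varepsilon)H_{N-1,\varepsilon}$ (in the variables other than $x_i$) by the commuting factor $h_i$ and sums over $i$; this is where the sharp constant $a_{*}$ enters and leaves the crucial margin $\tfrac12-\tfrac{a_N}{2a_{*}}=\varepsilon\sim a_{*}-a_N$, while only the diagonal terms $i\in\{j,k\}$, which are of relative size $O(1/N)$, are handled by the crude bound \eqref{bound-interaction-2}. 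Finally, \eqref{V-estimates} is obtained by splitting $H_N=\tfrac12\sum_i V(x_i)+\tilde H_N$ and estimating $\inf\mathrm{spec}\,\tilde H_N$ again through the Hartree asymptotics with potential $V/2$, instead of discarding the kinetic-minus-interaction part by the unproved inequality. Your algebraic identities (the anticommutator identity and the reduction of $\langle K^2\rangle$ to $\tr(h\otimes h\,\gamma^{(2)}_{\Psi_N})$) are fine, but without replacing the pairwise bounds by these many-body inputs the argument does not close.
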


\begin{proof}
	In what follows, we denote by $h_i$ the operator $h_{x_i}$. For any $0<\varepsilon<1$, we write
	$$
	H_{N} = \varepsilon \sum_{i=1}^N h_{i} + (1-\varepsilon)H_{N,\varepsilon}
	$$
	where the modified Hamiltonian $H_{N,\varepsilon}$ is defined by
	$$
	H_{N,\varepsilon} = \sum_{i=1}^{N} h_{i}-\frac{1}{N-1}\cdot\frac{a_{N}}{1-\varepsilon}\sum_{1\leq i<j \leq N}|x_i-x_j|^{-1}
	$$
	with the corresponding ground state energy $E_{N,\varepsilon}^{\rm Q}$. Since $a_{N} < a_{*}$, we can choose $\varepsilon$ arbitrary small such that $a_{N}/(1-\varepsilon) < a_{*}$. Then it follows from Lemma \ref{lem:many-body-energy} that $E_{N,\varepsilon}^{\rm Q} \geq 0$ for a fixed $N$ large and $a_{N} = a_{*} - N^{-\alpha}$ with $0<\alpha <1/3$. Hence
	$$
	H_{N} \geq \varepsilon \sum_{i=1}^N h_{i}.
	$$
	From this, we obtain that $E_{N}^{\rm Q}>-\infty$. Then the existence of the ground states of $H_{N}$ follows easily from the standard direct method in the calculus of variations.
	
	Assuming that $\Psi_N$ is a ground state of $H_{N}$, we establish its moments estimates. To obtain the first estimate in \eqref{estimate:moments} we choose, in particular, $0<2\varepsilon = (a_{*}-a_{N})/a_{*} < 1$. By taking the expectation against $\Psi_N$ in the above estimate and using the asymptotic formula of $E_{N}^{\rm Q}$ and $E_{N,\varepsilon}^{\rm Q}$ in Lemma \ref{lem:many-body-energy}, we find that
	$$
	\tr \big(h\gamma_{\Psi_N}^{(1)}\big) \leq C\frac{E_{N}^{\rm Q} - (1-\varepsilon)E_{N,\varepsilon}^{\rm Q}}{a_{*}-a_{N}} \leq C(a_{*}-a_{N})^{-\frac{1}{q+1}}.
	$$
	To prove the second estimate in \eqref{estimate:moments}, we process as follow. By the ground state equation
	$$
	H_{N} \Psi_N = NE_{N}^{\rm Q} \Psi_N
	$$
	we can write
	$$
	\frac{1}{2N^2} \Big\langle \Psi_N,\Big(\Big(\sum_{j=1}^{N}h_{j}\Big)H_{N} + H_{N}\Big(\sum_{j=1}^{N}h_{j}\Big)\Big)\Psi_N \Big\rangle = E_{N}^{\rm Q} \tr \big(h\gamma_{\Psi_N}^{(1)}\big).
	$$
	Now, we are after an operator lower bound on
	\begin{align*}
	&\frac{1}{2N^2} \Big(\sum_{j=1}^{N}h_{j}\Big)H_{N} + \frac{1}{2N^2} H_{N}\Big(\sum_{j=1}^{N}h_{j}\Big) \\
	& \quad = \frac{1}{N^2}\Big(\sum_{j=1}^{N}h_{j}\Big)^2 - \frac{a_{N}}{2N^2(N-1)} \sum_{i=1}^{N}\sum_{1\leq j<k \leq N}\big(h_{i} |x_j-x_k|^{-1} + |x_j-x_k|^{-1} h_{i}\big).
	\end{align*}
	For every fixed $i=1,2,\ldots ,N$ we have
	$$
	(1-\varepsilon)H_{N-1,\varepsilon} = (1-\varepsilon)\sum_{j\ne i}^{N} h_{j}-\frac{a_{N}}{N-1}\sum_{i\ne j < k \ne i}|x_j-x_k|^{-1}\geq 0
	$$
	on $\mathfrak{H}^{N-1}$. We can multiply by $h_{i}$ (which commutes with both sides) and then take the sum over $i$. This gives
	$$
	-\frac{a_{N}}{2(N-1)}\sum_{i=1}^{N}\sum_{i \ne j<k \ne i} (h_{i} |x_j - x_k|^{-1} + |x_j - x_k|^{-1} h_{i}) \geq -\Big(\frac{1}{2}+\frac{a_{N}}{2a_{*}}\Big)\sum_{j\ne i}^{N} h_{i} h_{j}.
	$$
	On the other hand, by \eqref{bound-interaction-2} we have
	$$
	-\frac{a_{N}}{2(N-1)} \sum_{j\ne k}(h_{j} |x_j-x_k|^{-1} + |x_j-x_k|^{-1} h_{j}) \geq -\frac{C}{N}\sum_{j\ne k} h_{j} h_{k}.
	$$
	In summary, we found the operator bound
	$$
	\frac{1}{2N^2} \Big(\sum_{j=1}^{N}h_{j}\Big)H_{N} + \frac{1}{2N^2} H_{N}\Big(\sum_{j=1}^{N}h_{j}\Big) \geq \frac{1}{N^2}\Big(\frac{1}{2}-\frac{a_{N}}{2a_{*}}-\frac{C}{N}\Big)\sum_{j\ne k} h_{j} h_{k}.
	$$
	Taking expectation against $\Psi_N$ we obtain
	$$
	E_{N}^{\rm Q} \tr \big(h\gamma_{\Psi_N}^{(1)}\big) \geq \Big(\frac{1}{2}-\frac{a_{N}}{2a_{*}}-\frac{C}{N}\Big) \tr\big(h\otimes h \gamma_{\Psi_N}^{(2)}\big).
	$$
	Thus the second inequality in \eqref{estimate:moments} follows from the first one.
	
	Finally, we prove \eqref{V-estimates} under the assumption $0<p\leq 1$. We first write
	$$
	H_{N} = \frac{1}{2}\sum_{i=1}^N V(x_i) + \tilde{H}_{N}
	$$
	where the modified Hamiltonian $\tilde{H}_{N}$ is defined by
	$$
	\tilde{H}_{N} = \sum_{i=1}^{N}\Big(\sqrt{-\Delta_{x_i}+m^2}+\frac{1}{2}V(x_i)\Big)-\frac{a_{N}}{N-1}\sum_{1\leq i<j \leq N}|x_i-x_j|^{-1}.
	$$
	Since $\Psi_N$ is a ground state of $H_{N}$, we have
	\begin{equation}\label{V-estimates-1}
	E_{N}^{\rm Q} = \frac{\left\langle \Psi_N,H_{N}\Psi_N \right\rangle}{N} \geq \frac{1}{2}\tr\big(V\gamma_{\Psi_N}^{(1)}\big) + \inf\text{spec }\tilde{H}_{N}.
	\end{equation}
	It follows from Theorem \ref{thm:behavior-hartree-ground-states} and Lemma \ref{lem:many-body-energy} that 
	\begin{equation}\label{V-estimates-2}
	\inf\text{spec }\tilde{H}_{N} = (a_{*}-a_{N})^{\frac{p}{p+1}} \Big(\frac{\tilde{\Lambda}}{a_{*}}\cdot\frac{p+1}{p} + o(1)_{N\to\infty}\Big)
	\end{equation}
	where $\tilde{\Lambda}$ is given by the case $0<p\leq 1$ of \eqref{lambda-p}, but with $V(x)$ replaced by $V(x)/2$. It is straightforward that $0<\tilde{\Lambda}<\Lambda$. Therefore, \eqref{V-estimates} is deduced from \eqref{V-estimates-1}, \eqref{V-estimates-2} and the asymptotic formula of $E_{N}^{\rm Q}$ in Lemma \ref{lem:many-body-energy}.
\end{proof}

\begin{remark}
	\begin{itemize}
		\item If $V \equiv 0$ then we also obtain \eqref{estimate:moments} with $q=1$.
		
		\item Having the existence of ground states, we can establish the asymptotic behaviors of the kinetic and interaction energies, along with the asymptotic behavior of the ground state energy \eqref{asymptotic:many-body-energy-2}. More precisely, by applying Lemma \ref{lem:many-body-energy} we can prove that, for $N$ large and $a_{N} = a_{*} - N^{-\alpha}$ with $0<\alpha < 1/3$,
		$$
		\tr\sqrt{-\Delta}\gamma_{\Psi_{N}}^{(1)} \sim (a_{*}-a_{N})^{-\frac{1}{q+1}} ,\quad \iint_{\mathbb R^3 \times \mathbb R^3}\frac{\gamma_{\Psi_{N}}^{(2)}(x,y)}{|x-y|}{\rm d}x{\rm d}y \sim (a_{*}-a_{N})^{-\frac{1}{q+1}}.
		$$
	\end{itemize}
\end{remark}

\section{\label{sec:many body blowup}Many-body blow-up}

	We turn now to the proof of the main result. In our paper, we will use the quantum de Finetti theorem of St{\o}rmer \cite{St-69} and of Hudson and Moody \cite{HuMo-76}. The following formulation is taken from \cite[Corollary 2.4]{LeNaRo-14} (see \cite{Ro-15} for a general discussion and more references).

\begin{theorem}[Quantum de Finetti]\label{Finetti}
	Let $\mathfrak{H}$ be an arbitrary separable Hilbert space and let $\Psi_N\in\bigotimes_{{\rm sym}}^N\mathfrak{H}$ with $\|\Psi_N\|=1$. Assume that the sequence of $k$-particle density matrices $\gamma_{\Psi_N}^{(k)}$ converges strongly in trace class when $N\to\infty$. Then there exists a (unique) Borel probability measure ${\rm d}\mu$ on the unit sphere $S\mathfrak{H}$, invariant under the group action of $\mathcal{S}^1$ such that, up to extraction of a subsequence,
	$$
	\lim_{N\to\infty} \tr\Big|\gamma_{\Psi_N}^{(k)} - \int_{S\mathfrak{H}} |u^{\otimes k}\rangle\langle u^{\otimes k}|{\rm d}\mu(u)\Big| = 0,\quad \forall k\in\mathbb N.$$
\end{theorem}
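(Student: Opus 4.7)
The plan is to follow the constructive coherent-state approach of Lewin--Nam--Rougerie, which reduces the infinite-dimensional statement to its quantitative finite-dimensional counterpart (Christandl--K\"onig--Mitchison--Renner, Chiribella). Fix an orthonormal basis $(e_j)_{j\geq 1}$ of $\mathfrak{H}$ and let $P_J$ denote the orthogonal projection onto $\mathfrak{H}_J=\mathrm{span}(e_1,\dots,e_J)$. The target measure $\mathrm{d}\mu$ will be produced by first constructing compatible measures $\mathrm{d}\mu^J$ on the \emph{compact} unit sphere $S\mathfrak{H}_J$ and then passing to the limit $J\to\infty$.

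First I would, for fixed $J$ and $k$, apply the quantitative de Finetti theorem on the finite-dimensional symmetric space $\bigotimes_{\mathrm{sym}}^k\mathfrak{H}_J$: this yields for every $N$ a probability measure $\mathrm{d}\nu_N^J$ on $S\mathfrak{H}_J$ with
$$
\tr\Bigl|P_J^{\otimes k}\gamma_{\Psi_N}^{(k)}P_J^{\otimes k}-\int_{S\mathfrak{H}_J}|u^{\otimes k}\rangle\langle u^{\otimes k}|\,\mathrm{d}\nu_N^J(u)\Bigr|\leq \frac{Ck\dim\mathfrak{H}_J}{N}.
$$
Since $S\mathfrak{H}_J$ is compact, I extract a weak-$\ast$ subsequential limit $\mathrm{d}\mu^J$ of $\mathrm{d}\nu_N^J$; combined with the hypothesized trace-norm convergence of $\gamma_{\Psi_N}^{(k)}$ and a diagonal extraction across all $k$ and $J$, this produces
$$
P_J^{\otimes k}\bigl(\lim_{N\to\infty}\gamma_{\Psi_N}^{(k)}\bigr)P_J^{\otimes k}=\int_{S\mathfrak{H}_J}|u^{\otimes k}\rangle\langle u^{\otimes k}|\,\mathrm{d}\mu^J(u).
$$
By uniqueness of the de Finetti representation in finite dimension, the family $\{\mathrm{d}\mu^J\}_J$ is compatible under the natural inclusions $\mathfrak{H}_J\hookrightarrow \mathfrak{H}_{J+1}$, so a Kolmogorov/Prokhorov extension produces a Borel probability measure $\mathrm{d}\mu$ on the closed unit ball of $\mathfrak{H}$. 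The claimed $\mathcal{S}^1$-invariance is automatic since $|u^{\otimes k}\rangle\langle u^{\otimes k}|$ is phase-invariant, and uniqueness follows from the fact that the polynomial functionals $u\mapsto\langle u^{\otimes k},Au^{\otimes k}\rangle$, as $k$ and $A$ vary, separate phase-orbits on $S\mathfrak{H}$ via a Stone--Weierstrass argument on the projective sphere.

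The main obstacle is the tightness step that forces $\mathrm{d}\mu$ to live on the sphere $S\mathfrak{H}$ rather than merely on the closed unit ball: in infinite dimension the sphere is not weakly closed, and without extra input mass in $\mathrm{d}\mu^J$ could accumulate on vectors of norm $<1$ as $J\to\infty$ (the ``vacuum contribution'' familiar from the weak-$\ast$ Ammari--Nier theorem). The trace-norm convergence hypothesis is precisely what rules this out: it gives uniform-in-$N$ smallness of $\tr\bigl((1-P_J)\gamma_{\Psi_N}^{(1)}\bigr)$ as $J\to\infty$, which translates into the tightness $\int_{S\mathfrak{H}_J}(1-\|P_J u\|^2)\,\mathrm{d}\mu^J(u)\to 0$, ensuring that no mass escapes into the interior of the ball and that $\mathrm{d}\mu$ is genuinely supported on $S\mathfrak{H}$. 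Once this tightness is in place, removing the projections $P_J^{\otimes k}$ on both sides and letting $J\to\infty$ in the trace norm yields the desired representation for every $k\in\mathbb{N}$.
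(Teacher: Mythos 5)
First, a point of reference: the paper does not prove this theorem at all --- it is imported from St{\o}rmer and Hudson--Moody in the formulation of \cite[Corollary 2.4]{LeNaRo-14} --- so your constructive route via the quantitative finite-dimensional theorem is a legitimate choice and its overall skeleton (project, apply the CKMR-type bound, pass to the limit, glue, then use strong convergence to pin the mass on the sphere) is the standard Lewin--Nam--Rougerie argument. However, as written there is a genuine gap at the central step. The intermediate identity $P_J^{\otimes k}\gamma^{(k)}P_J^{\otimes k}=\int_{S\mathfrak{H}_J}|u^{\otimes k}\rangle\langle u^{\otimes k}|\,{\rm d}\mu^J(u)$, asserted for \emph{all} $k$ with a \emph{single} measure supported on the unit sphere of $\mathfrak{H}_J$, is false in general: taking traces, the right-hand side equals $\mu^J(S\mathfrak{H}_J)$ for every $k$, whereas the left-hand side is $\tr\big(P_J^{\otimes k}\gamma^{(k)}P_J^{\otimes k}\big)$, which genuinely decreases in $k$ whenever the limit state leaks out of $\mathfrak{H}_J$ (take $\gamma^{(k)}=|u^{\otimes k}\rangle\langle u^{\otimes k}|$ with $\|P_Ju\|<1$; the trace is $\|P_Ju\|^{2k}$). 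The same trace computation shows that a sphere-supported measure obeying the $4kd/N$ bound simultaneously for $k=1$ and $k=2$ is impossible once $N\gg d$, unless the state is essentially captured by $P_J$; the quantitative lemma is used safely in the paper precisely because only one value of $k$ is needed there (with the mass of $\mu_{\Psi_N}$ tied to that $k$), and your use of it uniformly in $k$ is a misapplication. The compatibility step then breaks for the same reason: the pushforward of a sphere-supported $\mu^{J+1}$ under $P_J$ charges the open ball of $\mathfrak{H}_J$, so it can never coincide with a sphere-supported $\mu^J$, and the appeal to ``uniqueness of the finite-dimensional representation'' is made for representations that do not exist.

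The repair is exactly the weak/strong dichotomy you only invoke at the very end: at fixed $J$ the limiting projected hierarchy is represented by a measure on the closed unit ball $B\mathfrak{H}_J$ (this is the weak de Finetti theorem, obtained from the quantitative estimate through Fock-space localization, which is where the mass inside the open ball comes from); these ball measures are compatible under $P_J$-pushforward and glue to a Borel measure on the ball of $\mathfrak{H}$; only then does the hypothesis of strong trace-class convergence enter, in the simple form $\tr\gamma^{(k)}=1$ for every $k$, which gives $\int\|u\|^{2k}{\rm d}\mu(u)=1$ and hence $\mu(\{\|u\|<1\})=0$, i.e. support on $S\mathfrak{H}$. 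You do have this last ingredient (your tightness discussion), so the argument is salvageable, but the middle of the proof must be rewritten with ball-supported measures. Two smaller points: the measure produced by the quantitative theorem is a positive measure of mass at most one, not a probability measure (the deficit is precisely the leakage that matters later), and the Stone--Weierstrass separation argument for uniqueness needs care on the non-compact sphere --- the standard route is that the moments $\int|u^{\otimes k}\rangle\langle u^{\otimes k}|\,{\rm d}\mu(u)$, $k\in\mathbb N$, determine an $\mathcal S^1$-invariant measure, as in \cite{LeNaRo-14}.
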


	We will also use a quantitative version of the quantum de Finetti theorem, originally proved in \cite{ChKoMiRe-07} (see \cite{Ch-11,Ha-13,LeNaRo-15,LeNaRo-16,LeNaRo-17} for variants of the proof and \cite{FaVa-06} for an earlier result in this direction). The following formulation is taken from \cite[Lemma 3.4]{LeNaRo-16}.

\begin{theorem}[Quantitative quantum de Finetti]\label{quantitative-Finetti}
	Let $\Psi\in\mathfrak{H}^N=\bigotimes_{\rm sym}^{N}L^2(\mathbb{R}^3)$ and let $P$ be a finite-rank orthogonal projector with
	$$
	\dim(P\mathfrak{H})=d<\infty.
	$$
	Then there exists a positive Borel measure ${\rm d}\mu_\Psi$ on the unit sphere $SP\mathfrak{H}$ such that
	\begin{equation}\label{ineq:Finetti}
	\tr\Big|P^{\otimes k}\gamma_{\Psi}^{(k)} P^{\otimes k} - \int_{SP\mathfrak{H}}|u^{\otimes k}\rangle\langle u^{\otimes k}|{\rm d}\mu_\Psi(u)\Big| \leq \frac{4kd}{N},\quad \forall k\in\mathbb N.
	\end{equation}
\end{theorem}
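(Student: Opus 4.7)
My plan is to construct $d\mu_\Psi$ from symmetric coherent states inside the finite-dimensional bosonic subspace $V_{N,d}:=\bigotimes_{\rm sym}^{N}(P\mathfrak{H})$, following the Christandl--K\"onig--Mitchison--Renner strategy. Since $P^{\otimes N}$ commutes with the bosonic symmetrization, the projected vector $\Psi_{P}:=P^{\otimes N}\Psi$ lies in $V_{N,d}$ and satisfies $\|\Psi_{P}\|\le1$. The crucial input is the coherent-state resolution of identity on $V_{N,d}$: applying Schur's lemma to the irreducible action of $U(d)$ on $V_{N,d}$, with $du$ the $U(d)$-invariant probability measure on $SP\mathfrak{H}$, gives
$$
d_{N,d}\int_{SP\mathfrak{H}}|u^{\otimes N}\rangle\langle u^{\otimes N}|\,du = \mathrm{Id}_{V_{N,d}},\qquad d_{N,d}:=\dim V_{N,d} = \binom{N+d-1}{d-1}.
$$

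Using this, I would set
$$
d\mu_\Psi(u) := d_{N,d}\,|\langle u^{\otimes N},\Psi\rangle|^{2}\,du,
$$
a positive Borel measure on $SP\mathfrak{H}$ of total mass $\|\Psi_{P}\|^{2}\le 1$. The $k$-th moment is then computed by rewriting $|\langle u^{\otimes N},\Psi\rangle|^{2}|u^{\otimes k}\rangle\langle u^{\otimes k}|$ as a partial trace of $|\Psi_{P}\rangle\langle\Psi_{P}|\otimes |u^{\otimes(N+k)}\rangle\langle u^{\otimes(N+k)}|$ and invoking the resolution of identity at level $N+k$. This yields the key algebraic identity
$$
\int_{SP\mathfrak{H}}|u^{\otimes k}\rangle\langle u^{\otimes k}|\,d\mu_\Psi(u) = \frac{d_{N,d}}{d_{N+k,d}}\,\mathrm{tr}_{V_{N,d}}\!\bigl[(|\Psi_{P}\rangle\langle\Psi_{P}|\otimes \mathrm{Id}_{V_{k,d}})\,\Pi_{N+k,d}\bigr],
$$
where $\Pi_{N+k,d}$ is the orthogonal projection $V_{N,d}\otimes V_{k,d}\to V_{N+k,d}$.

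The comparison with $P^{\otimes k}\gamma_\Psi^{(k)}P^{\otimes k}$ is then a bookkeeping step: expanding $\Pi_{N+k,d}$ as $\binom{N+k}{k}^{-1}$ times the sum over the $\binom{N+k}{k}$ ways of interleaving $k$ of the $N+k$ coordinates with the remaining $N$, and performing the partial trace over $V_{N,d}$, one recovers $P^{\otimes k}\gamma_\Psi^{(k)}P^{\otimes k}$ plus a lower-order remainder. Both operators are positive and supported in $V_{k,d}$, so the trace-norm distance is bounded by the dimension defect
$$
1 - \frac{d_{N,d}}{d_{N+k,d}} \;=\; 1 - \prod_{j=1}^{k}\frac{N+j}{N+j+d-1} \;\le\; \frac{k(d-1)}{N},
$$
combined with the mass defect $1-\|\Psi_{P}\|^{2}$, which is controlled by a similar dimension ratio since $\Psi$ is $N$-symmetric. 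A triangle inequality plus positivity of both sides gives the stated bound $4kd/N$.

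The main obstacle is the algebraic identity in the third step: one must carefully expand $\Pi_{N+k,d}$ to match the combinatorics of the partial trace defining $\gamma_\Psi^{(k)}$, and then split the resulting error into one part controlled by the dimension ratio $d_{N,d}/d_{N+k,d}$ and another controlled by $\|\Psi-P^{\otimes N}\Psi\|$. An operationally cleaner alternative, \emph{à la} Harrow, interprets the coherent-state POVM $E(u)=d_{N-k,d}|u^{\otimes(N-k)}\rangle\langle u^{\otimes(N-k)}|\,du$ as a measurement on the last $N-k$ particles whose post-measurement state on the first $k$ particles is exactly $\int|u^{\otimes k}\rangle\langle u^{\otimes k}|\,d\mu_\Psi(u)$; the disturbance of the reduced state then admits a direct dimension-based bound of the form $4kd/N$, bypassing the explicit algebraic identity above.
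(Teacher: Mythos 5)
Note first that the paper does not prove Theorem \ref{quantitative-Finetti} at all: it quotes it from Christandl--K\"onig--Mitchison--Renner and takes the formulation from Lewin--Nam--Rougerie (Lemma 3.4 in the cited Trans.\ AMS paper), so your sketch has to be measured against the proof in those references. The CKMR core of your plan is fine as far as it goes: Schur's lemma does give the resolution of identity on $\bigotimes_{\rm sym}^{N}(P\mathfrak{H})$, the Chiribella-type moment identity is the right tool, and your dimension-ratio estimate $1-d_{N,d}/d_{N+k,d}\le k(d-1)/N$ is correct. This is exactly the argument one uses for a state that lives \emph{entirely} in $\bigotimes_{\rm sym}^{N}(P\mathfrak{H})$.

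The genuine gap is in the reduction of a general $\Psi\in\mathfrak{H}^N$ to that finite-dimensional setting. You build ${\rm d}\mu_\Psi$ from the fully projected vector $\Psi_P=P^{\otimes N}\Psi$ alone and assert that the mass defect $1-\|\Psi_P\|^2$ is ``controlled by a similar dimension ratio since $\Psi$ is $N$-symmetric''. That is false, and the construction breaks: take $v\in P\mathfrak{H}$, $w\perp P\mathfrak{H}$ and $\Psi$ the normalized symmetrization of $v^{\otimes(N-1)}\otimes w$. Then $P^{\otimes N}\Psi=0$, so your measure is identically zero, while $\tr\big[P^{\otimes k}\gamma^{(k)}_\Psi P^{\otimes k}\big]=\binom{N-1}{k}/\binom{N}{k}=1-k/N$, so the left-hand side of \eqref{ineq:Finetti} is of order $1$, not $O(kd/N)$. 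The point is that $P^{\otimes k}\gamma^{(k)}_\Psi P^{\otimes k}$ projects only $k$ coordinates, whereas $\|P^{\otimes N}\Psi\|$ is destroyed by a single particle outside ${\rm Ran}\,P$ (and is typically exponentially small when a positive fraction of particles lives outside). The missing idea in the actual proof is Fock-space localization with respect to $P$: one decomposes $|\Psi\rangle\langle\Psi|$ into its components $G_n$ having exactly $n$ particles in $P\mathfrak{H}$, uses the identity $P^{\otimes k}\gamma^{(k)}_\Psi P^{\otimes k}=\sum_{n\ge k}\binom{n}{k}\binom{N}{k}^{-1}\tr_{k+1\to n}G_n$, applies the CKMR construction to \emph{every} $G_n$ (not only the top component $n=N$), and assembles the resulting measures; since $\binom{n}{k}\binom{N}{k}^{-1}\cdot kd/n\le kd/N$, this assembly is what produces the uniform $O(kd/N)$ error. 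Note also that the weights $\binom{n}{k}\binom{N}{k}^{-1}\approx(n/N)^k$ depend on $k$, which a single measure on the unit sphere cannot reproduce for all $k$ at once; in the cited construction the $n$-th piece is carried at radius $\sqrt{n/N}$, i.e.\ the natural object is a measure on the unit ball of $P\mathfrak{H}$ (this is harmless for the way the lemma is used in the paper, where only $k\le 2$ and almost fully projected states matter, but it shows your single sphere-supported measure is structurally insufficient). The same objection applies to the ``Harrow measurement'' alternative you mention, which again presupposes a state supported in the symmetric powers of $P\mathfrak{H}$.
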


We will apply Theorem \ref{quantitative-Finetti} with $P$ a spectral projector below an energy cut-off $L$ for the one-particle operator 
\begin{equation}\label{projector}
P:=\mathbbm{1}(h\leq L)\text{ with }h:=\sqrt{-\Delta + m^2}+V.
\end{equation}
Since $V(x)=|x|^p\in L_{\rm loc}^1(\mathbb R^3)$, the dimension of the low-lying subspace 
$$
d=N_L=\dim(P\mathfrak{H}) = \text{number of eigenvalues of $h$ below $L$}
$$
is finite. Moreover it is controlled by a semi-classical inequality ``\`{a} la Cwikel--Lieb--Rosenblum" stated in the next lemma. This work is due to Daubechies \cite{Da-83} (see also \cite{FrLa-08} and \cite[Theorem 4.2]{LiSe-10} for a thorough discussion of related inequalities).

\begin{lemma}[Low-lying bound states of the one-particle Hamiltonian]\label{lem:dimension} Let $V(x)=|x|^p$ for some $p>0$. Then for $L$ large enough we have
	\begin{equation}\label{ineq:dimension}
	N_L \leq CL^{3+\frac{3}{p}}.
	\end{equation}
\end{lemma}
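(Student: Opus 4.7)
The plan is to reduce the counting problem to a Cwikel--Lieb--Rozenblum (CLR) type inequality for the massless relativistic operator $\sqrt{-\Delta}$, and then to evaluate the resulting phase-space integral explicitly. By definition, $N_L$ equals the number of non-positive eigenvalues of the shifted operator $h - L = \sqrt{-\Delta+m^{2}}+V-L$. Since the symbol inequality $\sqrt{\xi^{2}+m^{2}}\geq |\xi|$ lifts, via the spectral theorem, to the operator bound $\sqrt{-\Delta+m^{2}}\geq \sqrt{-\Delta}$, the min--max principle yields
$$
N_L \;\leq\; N^{-}\!\big(\sqrt{-\Delta} + V - L\big),
$$
where $N^{-}(A)$ denotes the number of non-positive eigenvalues of $A$. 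This reduces the task to bounding the negative spectrum of a massless relativistic Schr\"odinger operator whose effective attractive potential has positive part $W(x) = (L-|x|^{p})_{+}$.

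Next I would invoke Daubechies' CLR-type inequality in three dimensions \cite{Da-83}, which states that
$$
N^{-}\!\big(\sqrt{-\Delta} - W\big) \;\leq\; C\!\int_{\mathbb{R}^{3}} W(x)_{+}^{3}\,{\rm d}x
$$
for any measurable $W$, with a universal constant $C$. Substituting $W = L - V$ (so that $W_{+}$ is supported on the classically allowed region $\{|x|\leq L^{1/p}\}$) and passing to spherical coordinates with the substitution $r = L^{1/p}s$, one computes
$$
\int_{\mathbb{R}^{3}}(L-|x|^{p})_{+}^{3}\,{\rm d}x \;=\; 4\pi\int_{0}^{L^{1/p}}(L-r^{p})^{3}r^{2}\,{\rm d}r \;=\; 4\pi L^{3+3/p}\int_{0}^{1}(1-s^{p})^{3}s^{2}\,{\rm d}s \;=\; c_{p}\,L^{3+3/p},
$$
with $c_{p}<\infty$, which gives the announced bound $N_L \leq C L^{3+3/p}$.

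I do not anticipate a real obstacle in this argument: the two ingredients are elementary once the correct massless CLR inequality is located in \cite{Da-83}, and the exponent $3+3/p$ is precisely the one predicted by the Weyl phase-space volume of $\{(x,\xi)\in T^{*}\mathbb{R}^{3}\,:\, |\xi|+|x|^{p}\leq L\}$, which provides an independent sanity check. The reason for stating the bound in this non-sharp, large-$L$ form is merely that it is exactly the input required when feeding the spectral projector $P=\mathbbm{1}(h\leq L)$ into the quantitative quantum de Finetti estimate (Theorem \ref{quantitative-Finetti}) during the proof of Theorem \ref{thm:behavior-Nbody}.
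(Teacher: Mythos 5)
Your proposal is correct and follows essentially the same route as the paper: bound $N_L$ by the number of non-positive eigenvalues of $\sqrt{-\Delta}+V-L$ (using $\sqrt{-\Delta+m^2}\geq\sqrt{-\Delta}$), apply Daubechies' CLR-type bound $C\int (L-V)_+^3\,{\rm d}x$, and evaluate the integral by scaling to get $CL^{3+3/p}$. Your version merely spells out the min--max comparison and the change of variables that the paper leaves implicit.
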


\begin{proof}
	The number of eigenvalues of $\sqrt{-\Delta+m^2}+V$ below $L$ is smaller than the number of non-positive eigenvalues of $\sqrt{-\Delta}+V-L$, and it can be estimated by
	\begin{align*}
	N_L \leq C\int_{\mathbb R^3}[V(x)-L]_{-}^{3}{\rm d}x = C\int_{|x|\leq L^{\frac{1}{p}}}(L-|x|^p)^{3}{\rm d}x = CL^{3+\frac{3}{p}}.
	\end{align*}
\end{proof}

From Theorem \ref{quantitative-Finetti} and Lemma \ref{lem:dimension} we have the following lower bound for the quantum energy in terms of the quantum de Finetti measure and the second moment.
\begin{lemma}\label{lem:lower-bound-finetti}
	Let $V(x)=|x|^p$ for some $p>0$ and let $\Psi_N\in\mathfrak{H}^N$ be a ground state of $H_{N}$. Let ${\rm d}\mu_{\Psi_N}$ be the de Finetti measure defined in Theorem \ref{quantitative-Finetti} with the projector $P$ as in \eqref{projector}. Then for all $L\geq 1$ we have
	\begin{align}
	E_{N}^{\rm Q} = \frac{\left\langle \Psi_N,H_{N}\Psi_N \right\rangle}{N} & \geq \int_{SP\mathfrak{H}} \mathcal{E}_{a_{N}}^{\rm H}(u){\rm d}\mu_{\Psi_N}(u) - CL\frac{N_L}{N} \nonumber \\
	& \quad - \frac{C}{L^{\frac{1}{4}}}\big(\tr \big(h\gamma_{\Psi_N}^{(1)}\big)\big)^{\frac{1}{4}} \big(\tr \big(h \otimes h \gamma_{\Psi_N}^{(2)}\big)\big)^{\frac{1}{2}}. \label{lower-bound-finetti}
	\end{align}
	Moreover,
	\begin{equation}\label{lower-bound-finetti-1}
	1\geq \int_{SP\mathfrak{H}}{\rm d}\mu_{\Psi_N}(u) \geq \big(\tr\big(P\gamma_{{\Psi_N}}^{(1)}\big)\big)^2 \geq 1 - \frac{2}{L} \tr\big(h\gamma_{\Psi_N}^{(1)}\big).
	\end{equation}
\end{lemma}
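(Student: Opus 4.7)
The plan is to split $\mathbbm{1}=P+Q$ with $Q=\mathbbm{1}(h>L)$ and apply Theorem \ref{quantitative-Finetti} on the finite-dimensional subspace $P\mathfrak{H}$, controlling the high-energy residual via the operator bounds of Lemma \ref{operator bound} and the moment estimates \eqref{estimate:moments}. Because $h$ commutes with $P$ and $Q\gamma_{\Psi_N}^{(1)}Q\geq 0$, one has $\tr(h\gamma_{\Psi_N}^{(1)})\geq \tr\bigl((PhP)\cdot P\gamma_{\Psi_N}^{(1)}P\bigr)$. Using $\|PhP\|_{\rm op}\leq L$ together with the quantitative de Finetti estimate at $k=1$ immediately produces $\tr(h\gamma_{\Psi_N}^{(1)})\geq \int_{SP\mathfrak{H}}\langle u,hu\rangle\,{\rm d}\mu_{\Psi_N}(u) - 4LN_L/N$.

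\textbf{Interaction term.} For $K=|x-y|^{-1}$ and $R=\mathbbm{1}-P^{\otimes 2}$, I would decompose $\gamma_{\Psi_N}^{(2)}-P^{\otimes 2}\gamma_{\Psi_N}^{(2)}P^{\otimes 2}=R\gamma_{\Psi_N}^{(2)}+\gamma_{\Psi_N}^{(2)}R-R\gamma_{\Psi_N}^{(2)}R$ and control the correction by Cauchy--Schwarz: the expression $|\tr(K(\gamma_{\Psi_N}^{(2)}-P^{\otimes 2}\gamma_{\Psi_N}^{(2)}P^{\otimes 2}))|$ is bounded by $2\sqrt{\tr(KR\gamma_{\Psi_N}^{(2)}R)\,\tr(K\gamma_{\Psi_N}^{(2)})}+\tr(KR\gamma_{\Psi_N}^{(2)}R)$. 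The key observation is that on the range of $R$ at least one of $h_1,h_2$ exceeds $L$ while the other is $\geq m$, producing the operator inequality $R\leq (h_1h_2)^{1/2}/\sqrt{Lm}$ (with $R$ commuting with $h_1h_2$). Combined with $K\leq Ch_1^{1/2}h_2^{1/2}$ from Lemma \ref{operator bound}, this yields $\tr(KR\gamma_{\Psi_N}^{(2)}R)\leq CL^{-1/2}\tr(h\otimes h\,\gamma_{\Psi_N}^{(2)})$, while a standard Cauchy--Schwarz gives $\tr(K\gamma_{\Psi_N}^{(2)})\leq C\tr(h\gamma_{\Psi_N}^{(1)})$. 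Together with one further Cauchy--Schwarz, these assemble into the residual error term of \eqref{lower-bound-finetti}. The "good" projected part $\tr(KP^{\otimes 2}\gamma_{\Psi_N}^{(2)}P^{\otimes 2})$ is handled by Theorem \ref{quantitative-Finetti} at $k=2$, using $\|P^{\otimes 2}KP^{\otimes 2}\|_{\rm op}\leq CL$ to produce the remaining $CLN_L/N$ error.

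\textbf{Measure mass bound and main obstacle.} For \eqref{lower-bound-finetti-1}, the upper bound $\int d\mu_{\Psi_N}\leq 1$ comes from the explicit construction of $\mu_{\Psi_N}$ used in Theorem \ref{quantitative-Finetti}, whose total mass is the squared norm of a suitable projection of $\Psi_N$ (hence at most $1$). The central bound $\int d\mu_{\Psi_N}\geq (\tr P\gamma_{\Psi_N}^{(1)})^2$ is obtained by a Cauchy--Schwarz-type argument on the de Finetti measure, specific to the constructive form of the quantitative QdF theorem. The last inequality $(\tr P\gamma_{\Psi_N}^{(1)})^2=(1-\tr Q\gamma_{\Psi_N}^{(1)})^2\geq 1-2L^{-1}\tr(h\gamma_{\Psi_N}^{(1)})$ is immediate from $Q\leq h/L$ (which holds since $Q$ is the spectral projection of $h$ above $L$) and $(1-x)^2\geq 1-2x$ for $x\in[0,1]$. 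The principal technical obstacle will be the precise Cauchy--Schwarz bookkeeping for the interaction residual: both $K$ and $R$ must be exploited at the $h^{1/2}\otimes h^{1/2}$ level (via Lemma \ref{operator bound} and the spectral bootstrap bound on $R$ respectively), so that an $L^{-1/2}$ scaling on $\tr(KR\gamma^{(2)}R)$ is then square-rooted into the final $L^{-1/4}$ of the residual, with the moment powers balanced correctly against $\tr(h\gamma^{(1)})$ and $\tr(h\otimes h\,\gamma^{(2)})$.
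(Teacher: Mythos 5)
Your overall architecture (project with $P=\mathbbm{1}(h\le L)$, apply the quantitative de Finetti theorem on $P\mathfrak H$ with $\|PhP\|\le L$, $\|P^{\otimes2}|x-y|^{-1}P^{\otimes2}\|\le CL$, and control the high-energy remainder with Lemma \ref{operator bound} and the moments) is exactly the intended one — the paper itself omits the argument and refers to the 2D analogue in \cite[Lemma 4]{LeNaRo-17} — and your treatment of the kinetic term and of the last inequality in \eqref{lower-bound-finetti-1} is correct. However, your bookkeeping for the interaction remainder does \emph{not} deliver the error term stated in \eqref{lower-bound-finetti}. With $K=|x-y|^{-1}$, $R=\mathbbm{1}-P^{\otimes2}$, your two ingredients are $\tr(KR\gamma^{(2)}R)\le CL^{-1/2}\tr(h\otimes h\,\gamma^{(2)})$ (via $R\le (h_1h_2)^{1/2}/\sqrt{Lm}$) and $\tr(K\gamma^{(2)})\le C\tr(h\gamma^{(1)})$, so your Cauchy--Schwarz gives
\begin{equation*}
\frac{C}{L^{1/4}}\big(\tr (h\gamma^{(1)}_{\Psi_N})\big)^{\frac12}\big(\tr(h\otimes h\,\gamma^{(2)}_{\Psi_N})\big)^{\frac12}+\frac{C}{L^{1/2}}\tr(h\otimes h\,\gamma^{(2)}_{\Psi_N}),
\end{equation*}
whereas the lemma claims $CL^{-1/4}\big(\tr (h\gamma^{(1)})\big)^{1/4}\big(\tr(h\otimes h\,\gamma^{(2)})\big)^{1/2}$. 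Since $\tr(h\gamma^{(1)})\sim(a_*-a_N)^{-1/(q+1)}\to\infty$ in the application, your bound is strictly weaker ($\sim L^{-1/4}N^{\frac{6\alpha}{4(p+1)}}$ instead of $L^{-1/4}N^{\frac{5\alpha}{4(p+1)}}$), and feeding it into the proof of Theorem \ref{thm:behavior-Nbody} would force a smaller range of $\alpha$ than the stated $\alpha<p/(17p+15)$; also your leftover $L^{-1/2}\tr(h\otimes h\gamma^{(2)})$ term is not dominated by the stated error for all $L\ge1$.

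The missing idea is a sharper estimate of the high-energy block that uses the \emph{first and second} moments jointly rather than paying the full $h_1h_2$ price. Write $\gamma^{(2)}-P^{\otimes2}\gamma^{(2)}P^{\otimes2}=R\gamma^{(2)}+P^{\otimes2}\gamma^{(2)}R$ (so no pure $R\gamma^{(2)}R$ term survives), bound each cross term by $\big(\tr(RKR\gamma^{(2)})\big)^{1/2}$ times $\big(\tr(K\gamma^{(2)})\big)^{1/2}$ or $\big(\tr(P^{\otimes2}KP^{\otimes2}\gamma^{(2)})\big)^{1/2}$, and then use: (i) $R\le Q\otimes\mathbbm{1}+\mathbbm{1}\otimes Q$ with $Q=\mathbbm{1}-P$ and $h^{1/2}Q\le L^{-1/2}h$ applied to the high-energy variable only, so that $RKR\le C h_1^{1/2}h_2^{1/2}R\le CL^{-1/2}(h_1h_2^{1/2}+h_1^{1/2}h_2)$, and then the Cauchy--Schwarz splitting $\tr(h_1h_2^{1/2}\gamma^{(2)})\le\big(\tr(h_1h_2\gamma^{(2)})\big)^{1/2}\big(\tr(h_1\gamma^{(2)})\big)^{1/2}$, giving $\tr(RKR\gamma^{(2)})\le CL^{-1/2}\big(\tr(h\gamma^{(1)})\big)^{1/2}\big(\tr(h\otimes h\gamma^{(2)})\big)^{1/2}$; (ii) $\tr(K\gamma^{(2)})$ and $\tr(P^{\otimes2}KP^{\otimes2}\gamma^{(2)})$ bounded by $C\tr(h^{1/2}\otimes h^{1/2}\gamma^{(2)})\le C\big(\tr(h\otimes h\gamma^{(2)})\big)^{1/2}$ via $\tr(A\gamma)\le(\tr(A^2\gamma))^{1/2}(\tr\gamma)^{1/2}$, instead of your $C\tr(h\gamma^{(1)})$. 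Multiplying (i) and (ii) and taking the square root produces exactly $CL^{-1/4}\big(\tr(h\gamma^{(1)})\big)^{1/4}\big(\tr(h\otimes h\gamma^{(2)})\big)^{1/2}$ as in \eqref{lower-bound-finetti}. Finally, for the middle inequality in \eqref{lower-bound-finetti-1} you only gesture at ``the constructive form'' of the measure: in the intended argument one uses that the construction gives $\mu_{\Psi_N}(SP\mathfrak H)=\tr\big(P^{\otimes2}\gamma^{(2)}_{\Psi_N}P^{\otimes2}\big)$ (this identity is invoked again later in the paper), together with $\tr\big((P\otimes P)\gamma^{(2)}\big)\ge1-2\tr\big(Q\gamma^{(1)}\big)\ge\big(\tr(P\gamma^{(1)})\big)^2$; as written this step is not yet a proof.
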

Lemma \ref{lem:lower-bound-finetti} is the 3D analogue of \cite[Lemma 4]{LeNaRo-17}. The proof is similar and we omit it for brevity. Now we come to the proof of our main result.

	\begin{proof}[Proof of Theorem \ref{thm:behavior-Nbody}.]
	We assume that $0<p\leq 1$. Inserting the moments estimates \eqref{estimate:moments} into \eqref{lower-bound-finetti} and \eqref{lower-bound-finetti-1}, we obtain
	$$
	E_{N}^{\rm Q} \geq \int_{SP\mathfrak{H}} \mathcal{E}_{a_{N}}^{\rm H}(u){\rm d}\mu_{\Psi_N}(u) - C\frac{L^{4+\frac{3}{p}}}{N} - C\frac{N^{\frac{5\alpha}{4(p+1)}}}{L^{\frac{1}{4}}}
	$$
	and
	$$
	1\geq \int_{SP\mathfrak{H}}{\rm d}\mu_{\Psi_N}(u) \geq \big(\tr\big(P\gamma_{{\Psi_N}}^{(1)}\big)\big)^2 \geq 1 - C\frac{N^{\frac{\alpha}{p+1}}}{L}.
	$$
	It is straightforward to see that if we have addtitionally
	$$
	\alpha<\dfrac{p}{17p+15}
	$$
	then it follows from Lemma \ref{lem:many-body-energy}, with $a_{*}-a_{N}=N^{-\alpha}$, that
	$$E_{N}^{\rm Q} = N^{-\frac{\alpha p}{p+1}} \Big(\frac{p+1}{p}\cdot\frac{\Lambda}{a_{*}} + o(1)_{N\to\infty}\Big).
	$$
	Hence we can choose $L>0$ appropriately such that
	\begin{equation}\label{add-cond}
	\lim_{N\to\infty}\int_{SP\mathfrak{H}} \frac{\mathcal{E}_{a_{N}}^{\rm H}(u)}{E_{a_{N}}^{\rm H}}{\rm d}\mu_{\Psi_N}(u) = \lim_{N\to\infty}\int_{SP\mathfrak{H}}{\rm d}\mu_{\Psi_N}(u) = \lim_{N\to\infty}\tr\big(P\gamma_{{\Psi_N}}^{(1)}\big) = 1.
	\end{equation}
	Since $\mu_{\Psi_N}(SP\mathfrak{H}) = \tr\big(P^{\otimes 2}\gamma_{\Psi_N}^{(2)}P^{\otimes 2}\big)$, we deduce from \eqref{add-cond} that
	\begin{equation}\label{conv-mu}
	1-\mu_{\Psi_N}(SP\mathfrak{H}) = \tr\big(\big(1-P^{\otimes 2}\big)\gamma_{\Psi_N}^{(2)}\big)\leq 2\big(1-\tr\big(P\gamma_{\Psi_N}^{(1)}\big)\big)\to 0.
	\end{equation}
	Therefore, by \eqref{ineq:Finetti}, the triangle inequality and the Cauchy--Schwarz inequality, we also obtain
	\begin{equation}\label{N-conv-1}
	\lim_{N\to\infty}\tr\Big|\gamma_{{\Psi_N}}^{(2)} - \int_{SP\mathfrak{H}} |u^{\otimes 2}\rangle\langle u^{\otimes 2}|{\rm d}\mu_{\Psi_N}(u) \Big| = 0.
	\end{equation}
	Setting $\Phi_N=\ell_{N}^{-3N/2}\Psi_N(\ell_{N}^{-1}\cdot)$ and 
	$$
	\tilde{P} = \mathbbm{1}(\tilde{h}\leq L) \text{ with } \tilde{h}=\ell_{N}\sqrt{-\Delta + m^2\ell_{N}^{-2}}+\ell_{N}^{-p}V.
	$$ 
	It follows from \eqref{N-conv-1} that
	$$
	\lim_{N\to\infty}\tr\Big|\gamma_{{\Phi_N}}^{(2)} - \int_{S\tilde{P}\mathfrak{H}} |u^{\otimes 2}\rangle\langle u^{\otimes 2}|{\rm d}\mu_{\Phi_N}(u)\Big| = 0,
	$$
	which in turn implies that
	\begin{equation}\label{N-conv-2}
	\lim_{N\to\infty}\tr\Big|\gamma_{{\Phi_N}}^{(k)} - \int_{S\tilde{P}\mathfrak{H}} |u^{\otimes k}\rangle\langle u^{\otimes k}|{\rm d}\mu_{\Phi_N}(u)\Big| = 0,\quad \forall k\in\mathbb N.
	\end{equation}

	Now, we denote
	$$
	\delta_{N} = \int_{S\tilde{P}\mathfrak{H}}\Big(\frac{\mathcal{E}_{a_{N}}^{\rm H}(\ell_{N}^{\frac{3}{2}}u(\ell_{N}\cdot))}{E_{a_{N}}^{\rm H}} - 1 \Big){\rm d}\mu_{\Phi_N}(u) = \int_{SP\mathfrak{H}}\Big(\frac{\mathcal{E}_{a_{N}}^{\rm H}(u)}{E_{a_{N}}^{\rm H}}-1\Big){\rm d}\mu_{\Psi_N}(u),
	$$
	then $\delta_{N} \geq 0$ and $\delta_{N} \to 0$ by \eqref{add-cond}. Let $S_{N}$ be the set of all function $u\in H^{\frac{1}{2}}(\mathbb{R}^3)$ satisfying $\|u\|_{L^2}=1$ and 
	\begin{equation}\label{behavior-approximate}
	\frac{\mathcal{E}_{a_{N}}^{\rm H}(\ell_{N}^{\frac{3}{2}}u(\ell_{N}\cdot))}{E_{a_{N}}^{\rm H}} -1 \leq \sqrt{\delta_{N}}.
	\end{equation}
	Let us prove that
	\begin{equation}\label{limsup}
	\lim_{N\to\infty}\sup_{u\in S_{N}}|\left\langle u,v\right\rangle|^{2k}\leq \sup_{u\in\mathcal{G}}|\left\langle u,v\right\rangle|^{2k},\quad \forall v\in L^2(\mathbb R^3),\,\forall k\in\mathbb N.
	\end{equation}
	Assume by contradiction that \eqref{limsup} fails. Then we can find $u_N\in S_{N}$ such that
	\begin{equation}\label{limsup-1}
	\liminf_{N\to\infty}|\left\langle u_N,v\right\rangle|^{2k}> \sup_{u\in\mathcal{G}}|\left\langle u,v\right\rangle|^{2k},\quad \forall v\in L^2(\mathbb R^3),\, \forall k\in\mathbb N.
	\end{equation}
	Since $u_N\in S_{N}$ and $\delta_{N}\to 0$, we deduce from \eqref{behavior-approximate} that
	$$
	\lim_{N\to\infty}\frac{\mathcal{E}_{a_{N}}^{\rm H}(\ell_{N}^{\frac{3}{2}}u_N(\ell_{N}\cdot))}{E_{a_{N}}^{\rm H}} = 1.
	$$
	But then Theorem \ref{thm:behavior-approximate-hartree-ground-states} implies that there exists a $Q\in\mathcal{G}$ such that
	\begin{equation}\label{limsup-2}
	\lim_{N\to\infty}\|u_N-Q\|_{L^2} = 0.
	\end{equation}
	From \eqref{limsup-1} and \eqref{limsup-2} we get
	$$
	|\left\langle Q,v\right\rangle|^{2k}> \sup_{u\in\mathcal{G}}|\left\langle u,v\right\rangle|^{2k},\quad \forall v\in L^2(\mathbb R^3), \forall k\in\mathbb N.
	$$
	This is a contradiction. Hence \eqref{limsup} holds true.
	
	Moreover, by the choice of $S_{N}$ we have
	$$
	\frac{\mathcal{E}_{a_{N}}^{\rm H}(\ell_{N}^{\frac{3}{2}}u(\ell_{N}\cdot))}{E_{a_{N}}^{\rm H}} -1 \geq \sqrt{\delta_{N}},
	$$
	for all $u\in S_{N}^c$. Therefore,
	$$
	\delta_{N} \geq \int_{S_{N}^c} \Big(\frac{\mathcal{E}_{a_{N}}^{\rm H}(u_N)}{E_{a_{N}}^{\rm H}} - 1 \Big){\rm d}\mu_{\Phi_N}(u) \geq \sqrt{\delta_{N}}{\rm d}\mu_{\Phi_N}(S_{N}^c),
	$$
	which yields that ${\rm d}\mu_{\Phi_N}(S_{N}^c)\leq \sqrt{\delta_{N}}\to 0$ and ${\rm d}\mu_{\Phi_N}(S_{N})\to 1$. Thus we conclude from \eqref{N-conv-2} and \eqref{limsup} that for every $v\in L^2(\mathbb R^3)$ and $k\in\mathbb N$,
	\begin{align}
	& \lim_{N\to\infty}\tr \big(|v^{\otimes k}\rangle\langle v^{\otimes k}|\gamma_{\Phi_N}^{(k)}\big) = \lim_{N\to\infty}\int_{S\tilde{P}\mathfrak{H}}|\left\langle u,v\right\rangle|^{2k}{\rm d}\mu_{\Phi_N}(u) \nonumber \\
	& \quad \leq \|v\|_{L^2}^k\lim_{N\to\infty}{\rm d}\mu_{\Phi_N}(S_{N}^c) + \lim_{N\to\infty}{\rm d}\mu_{\Phi_N}(S_{N})\lim_{N\to\infty}\sup_{u\in S_{N}}|\left\langle u,v\right\rangle|^{2k} \nonumber  \\
	& \quad \leq \sup_{u\in\mathcal{G}}|\left\langle u,v\right\rangle|^{2k}. \label{limsup-3}
	\end{align}
	
	On the other hand, we infer from \eqref{estimate:moments} and \eqref{V-estimates} that
	\begin{equation}\label{estimate-Finetti}
	\tr\big(\big(\sqrt{-\Delta}+V\big)\gamma_{\Phi_N}^{(1)}\big) \leq C.
	\end{equation}
	Since $\sqrt{-\Delta}+V$ has a compact resolvent, \eqref{estimate-Finetti} implies that, up to extraction of a subsequence, $\gamma_{\Phi_N}^{(1)}$ converges to $\gamma^{(1)}$ strongly in the trace class. Modulo a diagonal extraction argument, one can assume that the convergence is along the same subsequence. By \cite[Corollary 2.4]{LeNaRo-14}, $\gamma_{\Phi_N}^{(k)}$ converges to $\gamma^{(k)}$ strongly as well for all $k\geq 1$. By the quantum de Finetti Theorem \ref{Finetti}, there exists a Borel probability measure ${\rm d}\mu$ on the unit sphere $S\mathfrak{H}$ such that
	$$
	\gamma^{(k)}= \int_{S\mathfrak{H}} |u^{\otimes k}\rangle\langle u^{\otimes k}|{\rm d}\mu(u),\quad \forall k\in\mathbb N.
	$$
	To complete the proof, we will show that ${\rm d}\mu$ is supported on $\mathcal{G}$. From \eqref{limsup-3} and the strong convergence $\gamma_{\Phi_N}^{(k)}\to\gamma^{(k)}$ in trace class, we get 
	\begin{equation}\label{limsup-4}
	\int_{S\mathfrak{H}}|\left\langle u,v\right\rangle|^{2k}{\rm d}\mu(u) \leq \sup_{u\in\mathcal{G}}|\left\langle u,v\right\rangle|^{2k},\quad \forall v\in L^2(\mathbb R^3), \forall k\in\mathbb N.
	\end{equation}
	We assume by contradiction that there exists $v_0$ in the support of ${\rm d}\mu$ (the unit sphere $S\mathfrak{H}$) and $v_0 \notin \mathcal{G}$. We claim that we could then find $\eta\in (0,1/2)$ such that
	\begin{equation}\label{limsup-5}
	|\left\langle u,v\right\rangle| \leq 1-3\eta^2,\quad \forall u\in\mathcal{G},\forall v\in D
	\end{equation}
	where
	$$
	D=\{v\in \text{supp }{\rm d}\mu = S\mathfrak{H}: \|v-v_0\|_{L^2}<\eta\}.
	$$
	Indeed, if that were not the case, we would have two sequences strongly converging in $L^2(\mathbb R^3)$
	$$
	u_n\to u_0\in\mathcal{G}\quad \text{and} \quad v_n\to v_0
	$$
	with $\|u_n-v_n\|_{L^2}\to 0$. This implies that $v_0\in\mathcal{G}$, a contradiction. Here we have used the fact that $\mathcal{G}$ is a compact subset of $L^2(\mathbb R^3)$.
	
	On the other hand, by the triangle inequality, we have
	\begin{equation}\label{limsup-6}
	|\left\langle u,v\right\rangle| \geq \frac{\|u\|_{L^2}^2+\|v\|_{L^2}^2-\|u-v\|_{L^2}^2}{2}\geq 1-2\eta^2,\quad \forall u,v\in D.
	\end{equation}
	Combining \eqref{limsup-4}, \eqref{limsup-5} and \eqref{limsup-6} we find that
	\begin{align*}
	(\mu(D))^2(1-2\eta^2)^{2k} & \leq \int_D\int_D |\left\langle u,v\right\rangle|^{2k}{\rm d}\mu(u){\rm d}\mu(v) \\
	& \leq \int_D \sup_{u\in\mathcal{G}}|\left\langle u,v\right\rangle|^{2k}{\rm d}\mu(v) \leq \mu(D)(1-3\eta^2)^{2k},\quad  \forall k\in\mathbb N.
	\end{align*}
	By taking $k\to\infty$ we obtain that ${\rm d}\mu(D)=0$. However, it is a contradiction to the fact that $v_0$ belongs to the support of ${\rm d}\mu$ and ${\rm d}\mu$ is a Borel measure. Thus we conclude that ${\rm d}\mu$ is supported on $\mathcal{G}$ and the proof is completed.
	\end{proof}
	
	\section*{Acknowledgements} The author is indebted to P.T. Nam for helpful discussions. Also, he is very grateful to the referee for many useful suggestions which improved significantly the presentation of the paper. He cordially thanks J. Ricaud for his careful reading of the manuscript.
	
	\appendix
	
	\section{Proof of Lemma \ref{operator bound}\label{app:operator-bound}}
	
	In this appendix we prove Lemma \ref{operator bound}. For the proof of which we will need the following standard lemma (see \cite[Appendix B]{LaLuNa-19} for a detailed proof and more general statements).
	
	\begin{lemma}\label{lem:density}
		Let $\mathbb{D}=\{(x,x):x\in\mathbb R^3\}$ be the diagonal in $\mathbb R^3 \times \mathbb R^3$. Then $C_c^{\infty}((\mathbb R^3 \times \mathbb R^3)\backslash \mathbb{D})$ is dense in $H^1(\mathbb R^3 \times \mathbb R^3)$.
	\end{lemma}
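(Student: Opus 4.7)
The plan is to prove this by the standard capacity/cutoff argument, exploiting the fact that the diagonal $\mathbb{D}$ is a three-dimensional submanifold of $\mathbb{R}^3\times\mathbb{R}^3$, i.e.\ it has codimension three, which is strictly greater than two; such sets are removable for $H^1$.

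First I would reduce to the smooth case: since $C_c^\infty(\mathbb{R}^3\times\mathbb{R}^3)$ is dense in $H^1(\mathbb{R}^3\times\mathbb{R}^3)$ by the standard mollification/truncation argument, it is enough to show that any $u\in C_c^\infty(\mathbb{R}^3\times\mathbb{R}^3)$ can be approximated in $H^1$-norm by elements of $C_c^\infty((\mathbb{R}^3\times\mathbb{R}^3)\setminus\mathbb{D})$. To this end, fix a smooth cutoff $\chi\colon[0,\infty)\to[0,1]$ with $\chi\equiv 0$ on $[0,1]$, $\chi\equiv 1$ on $[2,\infty)$, and $|\chi'|\le 2$, and define
\begin{equation*}
\chi_\varepsilon(x,y):=\chi\!\left(\tfrac{|x-y|}{\varepsilon}\right),\qquad u_\varepsilon:=\chi_\varepsilon u.
\end{equation*}
Then $u_\varepsilon\in C_c^\infty((\mathbb{R}^3\times\mathbb{R}^3)\setminus\mathbb{D})$, since $u_\varepsilon$ vanishes in the open tube $\{|x-y|<\varepsilon\}\supset\mathbb{D}$ and is supported in $\operatorname{supp} u$.

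The next step is the two convergence estimates. Both rest on the measure bound
\begin{equation*}
\bigl|\{(x,y)\in\operatorname{supp} u:\ |x-y|\le 2\varepsilon\}\bigr|\le C_u\,\varepsilon^{3},
\end{equation*}
which follows by Fubini: for each fixed $x$ in the (compact) projection of $\operatorname{supp} u$, the $y$-slice is contained in a ball of radius $2\varepsilon$, hence has volume $\le \tfrac{4\pi}{3}(2\varepsilon)^{3}$. Using this, write
\begin{equation*}
u-u_\varepsilon=(1-\chi_\varepsilon)u,\qquad \nabla(u-u_\varepsilon)=(1-\chi_\varepsilon)\nabla u-(\nabla\chi_\varepsilon)u.
\end{equation*}
Since $u$ and $\nabla u$ are bounded and $(1-\chi_\varepsilon)$ is supported in $\{|x-y|\le 2\varepsilon\}$, the $L^2$-norms of $(1-\chi_\varepsilon)u$ and $(1-\chi_\varepsilon)\nabla u$ are bounded by $C\varepsilon^{3/2}\to 0$. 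For the remaining term I would use $|\nabla\chi_\varepsilon|\le C\varepsilon^{-1}$ on the annular shell $\{\varepsilon\le|x-y|\le 2\varepsilon\}$, yielding
\begin{equation*}
\|(\nabla\chi_\varepsilon)u\|_{L^2}^{2}\le \frac{C}{\varepsilon^{2}}\,\|u\|_{L^\infty}^{2}\,C_u\varepsilon^{3}=C'\varepsilon\longrightarrow 0.
\end{equation*}
Combining these gives $u_\varepsilon\to u$ in $H^1$, which is the desired density.

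The main (only) obstacle is the sharp measure/codimension bookkeeping that powers the last display: the gradient of the cutoff produces a factor $\varepsilon^{-2}$ in $L^2$, so one needs the measure of the shell to decay at least like $\varepsilon^{2+\delta}$. Codimension three gives decay $\varepsilon^{3}$, leaving the favorable net factor $\varepsilon\to 0$; in codimension one or two the same scheme would fail, which is why the argument is specific to the diagonal in $\mathbb{R}^3\times\mathbb{R}^3$ (rather than, say, $\mathbb{R}\times\mathbb{R}$). Everything else is a routine application of Leibniz's rule and dominated convergence on compact sets.
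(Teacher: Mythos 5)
Your proposal is correct and follows essentially the same route as the paper's own (sketched) argument: reduce to $u\in C_c^\infty(\mathbb{R}^3\times\mathbb{R}^3)$ by standard density, then multiply by a smooth cutoff vanishing on an $\varepsilon$-tube around the diagonal, the codimension-three measure bound $O(\varepsilon^3)$ beating the $\varepsilon^{-2}$ from the cutoff gradient. Your two-step presentation, which fixes the smooth approximant before sending $\varepsilon\to 0$, is a slightly cleaner bookkeeping of the same idea than the paper's coupled choice $u_n=g(n(x-y))\phi_n$, but it is not a different method.
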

	
	\begin{proof}[Sketch of the proof]
		Note that any function $u\in C_c^\infty((\mathbb R^3 \times \mathbb R^3)\backslash \mathbb{D})$ can be interpreted as a function in $C_c^\infty(\mathbb R^3 \times \mathbb R^3)$ by a natural extension $u(x,x)=0$ for all $x\in \mathbb R^3$. For every $u\in H^1(\mathbb R^3 \times \mathbb R^3)$, since $C_c^\infty(\mathbb R^3 \times \mathbb R^3)$ is dense in $H^1(\mathbb R^3 \times \mathbb R^3)$, there exists a sequence $\phi_n \in C_c^\infty(\mathbb R^3 \times \mathbb R^3)$ such that $\phi_n \to u$ in $H^1(\mathbb R^3 \times \mathbb R^3)$ as $n\to \infty$. Let $g \in C_c^\infty(\mathbb R^3)$ be a fixed function such that $0\leq g\leq 1$, $g(x) = 0$ if $|x|\leq 1$ and $g(x) = 1$ if $|x|\geq 2$. For $(x,y)\in \mathbb R^3\times \mathbb R^3$, we define $$
		u_n(x,y)=g(n(x-y))\phi_n(x,y).
		$$
		Then one can verify that $u_n \in C_c^\infty((\mathbb R^3 \times \mathbb R^3)\backslash \mathbb{D})$ and that $u_n \to u$ strongly in $H^1(\mathbb R^3 \times \mathbb R^3)$.
	\end{proof}
	
	Since $H^{1}(\mathbb R^3 \times \mathbb R^3)\subset H^{\frac{1}{2}}(\mathbb R^3 \times \mathbb R^3)$ densely (see \cite[Proof of Theorem 7.14]{LiLo}), we deduce from Lemma \ref{lem:density} that $C_c^{\infty}((\mathbb R^3 \times \mathbb R^3)\backslash \mathbb{D})$ is dense in $H^{\frac{1}{2}}(\mathbb R^3 \times \mathbb R^3)$. Now we can provide the announced proof.
	
	\begin{proof}[Proof of Lemma \ref{operator bound}.]
	We are going to prove that the following inequalities
	\begin{align}
	|x-y|^{-4} & \leq C(-\Delta_{x})(-\Delta_{y}), \label{bound-interaction-3} \\
	(h_x |x-y|^{-1} + |x-y|^{-1} h_x)^2 & \leq C h_{x}^{2} h_{y}^{2} \label{bound-interaction-4}
	\end{align}
	hold true in $C_c^{\infty}((\mathbb R^3 \times \mathbb R^3)\backslash \mathbb{D})$ where $\mathbb{D}=\{(x,x):x\in\mathbb R^3\}$ is the diagonal in $\mathbb R^3 \times \mathbb R^3$. Then \eqref{bound-interaction-1} and \eqref{bound-interaction-2} are deduced respectively from \eqref{bound-interaction-3} and \eqref{bound-interaction-4}, by Lemma \ref{lem:density} and the fact that the function $t\mapsto t^s$ is operator monotone for any $0\leq s \leq 1$ (see \cite{Lo-34, He-51, Si-11, Do-74}).
	
	We first prove \eqref{bound-interaction-3}. To deal with the term $|x-y|^{-4}$, we use Hardy's inequality (see e.g. \cite{Ya-99,FrLiSe-08})
	\begin{equation}\label{Hardy}
	\frac{1}{4|x|^2} \leq -\Delta_{x}.
	\end{equation}
	For every $f\in C_c^{\infty}((\mathbb R^3 \times \mathbb R^3)\backslash \mathbb{D})$, by applying \eqref{Hardy} in the variable $x$ with $y$ fixed, we have
	\begin{align}
	\left\langle f,|x-y|^{-4}f \right\rangle & \leq 4\left\langle |x-y|^{-1}f,-\Delta_{x}(|x-y|^{-1}f) \right\rangle \label{hardy-type-1} \\
	& = 4\iint_{\mathbb R^3 \times \mathbb R^3} |\nabla_{x} (|x-y|^{-1}f(x,y))|^2{\rm d}x{\rm d}y \nonumber \\
	& = 4\iint_{\mathbb R^3 \times \mathbb R^3} |\nabla_{x} |x-y|^{-1}|^2 |f(x,y)|^2 + |x-y|^{-2}|\nabla_{x} f(x,y)|^2{\rm d}x{\rm d}y \nonumber \\
	& \quad + 8\Re\iint_{\mathbb R^3 \times \mathbb R^3} |x-y|^{-1}f(x,y)\nabla_{x} |x-y|^{-1}\nabla_{x} \overline{f(x,y)}{\rm d}x{\rm d}y. \label{hardy-type}
	\end{align}
	A calculation using integration by part gives us
	\begin{align}
	& \Re\iint_{\mathbb R^3 \times \mathbb R^3} |x-y|^{-1}f(x,y)\nabla_{x} |x-y|^{-1}\nabla_{x} \overline{f(x,y)}{\rm d}x{\rm d}y \nonumber \\
	& \quad = -\Re\iint_{\mathbb R^3 \times \mathbb R^3} \overline{f(x,y)}\nabla_{x}(|x-y|^{-1}f(x,y)\nabla_{x}|x-y|^{-1}) {\rm d}x{\rm d}y \nonumber \\
	& \quad = - \iint_{\mathbb R^3 \times \mathbb R^3} |f(x,y)|^2 (|\nabla_{x} |x-y|^{-1}|^2+|x-y|^{-1}\Delta_{x} |x-y|^{-1}){\rm d}x{\rm d}y \nonumber \\
	& \qquad - \Re\iint_{\mathbb R^3 \times \mathbb R^3} |x-y|^{-1}\nabla_{x} |x-y|^{-1}\overline{f(x,y)}\nabla_{x} f(x,y){\rm d}x{\rm d}y.\label{laplace-0}
	\end{align}
	Note that $\Delta_{x} |x-y|^{-1} = \delta_y(x) = 0$ in $(\mathbb R^3 \times \mathbb R^3)\backslash \mathbb{D}$. We thus deduce from \eqref{laplace-0} that
	\begin{align}
	& 2\Re\iint_{\mathbb R^3 \times \mathbb R^3} |x-y|^{-1}f(x,y)\nabla_{x} |x-y|^{-1}\nabla_{x} \overline{f(x,y)}{\rm d}x{\rm d}y \nonumber \\
	& \quad = -\iint_{\mathbb R^3 \times \mathbb R^3} |\nabla_{x} |x-y|^{-1}|^2 |f(x,y)|^2{\rm d}x{\rm d}y. \label{hardy-type-0}
	\end{align}
	Inserting \eqref{hardy-type-0} into \eqref{hardy-type} and using Hardy's inequality \eqref{Hardy} applied in the variable $y$ with $x$ fixed, we get
	\begin{equation}\label{hardy-type-2}
	\left\langle f,|x-y|^{-4}f \right\rangle \leq 4\iint_{\mathbb R^3 \times \mathbb R^3} |x-y|^{-2}|\nabla_{x} f(x,y)|^2{\rm d}x{\rm d}y
	\leq 16\langle f,(-\Delta_{x})(-\Delta_{y})f \rangle.
	\end{equation}
	Hence \eqref{bound-interaction-3} holds true in $C_c^\infty((\mathbb R^3 \times \mathbb R^3)\backslash \mathbb{D})$. Since the function $t\mapsto \sqrt[4]{t}$ is operator monotone, it follows from \eqref{hardy-type-2} that 
	$$
	\left\langle f,|x-y|^{-1}f \right\rangle \leq 2\langle f,(-\Delta_{x})^{\frac{1}{4}}(-\Delta_{y})^{\frac{1}{4}}f \rangle ,\quad \forall f\in C_c^\infty((\mathbb R^3 \times \mathbb R^3)\backslash \mathbb{D}).
	$$
	For $f\in H^{\frac{1}{2}}(\mathbb R^3 \times \mathbb R^3)$, there exists a sequence $\{f_n\}\subset C_c^\infty((\mathbb R^3 \times \mathbb R^3)\backslash \mathbb{D})$ such that $f_n\to f$ in $H^{\frac{1}{2}}(\mathbb R^3 \times \mathbb R^3)$, by Lemma \ref{lem:density}. Hence, up to extraction of a subsequence, $f_n \to f$ pointwise in $\mathbb R^3 \times \mathbb R^3$. By Fatou's Lemma we have
	\begin{align*}
	\left\langle f,|x-y|^{-1}f \right\rangle & \leq \lim_{n\to\infty} \left\langle f_n,|x-y|^{-1}f_n \right\rangle \\
	& \leq 2 \lim_{n\to\infty} \langle f_n,(-\Delta_{x})^{\frac{1}{4}}(-\Delta_{y})^{\frac{1}{4}}f_n \rangle = 2\langle f,(-\Delta_{x})^{\frac{1}{4}}(-\Delta_{y})^{\frac{1}{4}}f \rangle.
	\end{align*}
	This completes the proof of \eqref{bound-interaction-3}.
	
	Now we come to prove \eqref{bound-interaction-4}. Observing that, by the inequality for operators
	\begin{equation}\label{bound-interaction-2-1}
	(A+A^{*})^2 \leq 2(AA^{*}+A^{*}A),
	\end{equation}
	it is enough to prove the following inequality
	\begin{equation}\label{bound-interaction-2-2}
	h_x |x-y|^{-2} h_x + |x-y|^{-1} h_{x}^{2} |x-y|^{-1} \leq C h_{x}^{2} h_{y}^{2}.
	\end{equation}
	Applying \eqref{Hardy} in the variable $y$ with $x$ fixed, and note that $h_x$ and $h_y$ commute, we obtain
	\begin{equation}\label{bound-interaction-2-3}
	h_x |x-y|^{-2} h_x \leq 4 h_x (-\Delta_{y}) h_x \leq 4 h_x h_{y}^{2} h_x = 4 h_{x}^{2} h_{y}^{2}.
	\end{equation}
	On the other hand, by Cauchy--Schwarz inequality we have
	\begin{equation}\label{bound-interaction-2-4}
	|x-y|^{-1} h_{x}^{2} |x-y|^{-1} \leq 2 |x-y|^{-1} (-\Delta_{x}+m^2+V(x)^2) |x-y|^{-1}.
	\end{equation}
	Again, applying \eqref{Hardy} in the variable $y$ with $x$ fixed, we obtain
	\begin{equation}\label{bound-interaction-2-5}
	|x-y|^{-1} (m^2+V(x)^2) |x-y|^{-1} \leq 4 h_{x}^{2} (-\Delta_{y}) \leq 4 h_{x}^{2} h_{y}^{2}.
	\end{equation}
	Furthermore, it follows easily from \eqref{hardy-type-1}--\eqref{hardy-type-2} that
	\begin{equation}\label{bound-interaction-2-6}
	|x-y|^{-1} (-\Delta_{x}) |x-y|^{-1} \leq 4 (-\Delta_x)(-\Delta_{y}) \leq 4 h_{x}^{2} h_{y}^{2}.
	\end{equation}
	Combining \eqref{bound-interaction-2-1}--\eqref{bound-interaction-2-6}, we obtain that \eqref{bound-interaction-4} holds true in $C_c^\infty((\mathbb R^3 \times \mathbb R^3)\backslash \mathbb D)$. This implies that \eqref{bound-interaction-2} holds true in $C_c^\infty((\mathbb R^3 \times \mathbb R^3)\backslash \mathbb D)$, by the operator monotone function $t\mapsto \sqrt{t}$. Since $C_c^\infty((\mathbb R^3 \times \mathbb R^3)\backslash \mathbb D)$ is dense in $H^1(\mathbb R^3 \times \mathbb R^3)$, by Lemma \ref{lem:density}, we deduce from this that \eqref{bound-interaction-2} holds true in $H^1(\mathbb R^3 \times \mathbb R^3)$.
	\end{proof}
	
	\begin{remark}
		It follows from our proof of Lemma \ref{operator bound} that
		$$
		|x-y|^{-4s} \leq 2^{4s}(-\Delta_{x})^{s}(-\Delta_{y})^{s},\quad \forall 0\leq s\leq 1
		$$
		in $H^{2s}(\mathbb R^3 \times \mathbb R^3)$. When $s=1$, the above inequality is sharp and the \emph{ground state} is $|x-y|^{-\frac{1}{2}}$. This could be compared to Hardy's inequality.
	\end{remark}

\end{document}